
\documentclass[12pt,titlepage, authoryear]{article}




\usepackage{amssymb}

\usepackage{amsfonts,amsthm,graphicx,threeparttable,latexsym}
\usepackage[title]{appendix}
\RequirePackage[OT1]{fontenc}
\RequirePackage{graphicx}
\RequirePackage{amsthm}
\RequirePackage[cmex10]{amsmath}
\RequirePackage{natbib}
\RequirePackage[colorlinks,citecolor=blue,urlcolor=blue]{hyperref}
\RequirePackage{amssymb}
\theoremstyle{plain}

\newtheorem{proposition}{Proposition}[section]
\newtheorem{lemma}{Lemma}[section]
\newtheorem{remark}{Remark}[section]

\providecommand{\norm}[1]{\lVert#1\rVert}
\binoppenalty=10000
\relpenalty=10000

\begin{document}




\title{Semi-Global Solutions to DSGE Models:
Perturbation around a Deterministic Path}


\author{Viktors Ajevskis\footnote{email: Viktors.Ajevskis@bank.lv}\\Bank of Latvia\footnote{The views  expressed in this paper are the sole responsibility of the author and do not
necessarily reflect the position of the Bank of Latvia.}}
\maketitle

\begin{abstract}
This study proposes an approach based on a perturbation technique to construct global solutions to dynamic stochastic general equilibrium models (DSGE). The main idea is to expand a solution in a series of powers of a small parameter scaling the uncertainty in the economy around a solution to the deterministic model, i.e. the model where the volatility of the shocks vanishes. If a deterministic path is global in state variables, then so are the constructed solutions to the stochastic model, whereas these solutions are local in the scaling parameter. Under the assumption that a deterministic path is already known the higher order terms in the expansion are obtained recursively by solving linear rational expectations models with time-varying parameters. The present work also proposes a method rested on backward recursion for solving  general systems of linear rational expectations models with
time-varying parameters and determines the conditions under which the solutions of the method exist. 
\end{abstract}

\section{Introduction}
\label{intro}

%
Perturbation methods applied in macroeconomics are used to expand the exact solution around a deterministic steady state in powers of state variables
and a small
parameter scaling the uncertainty in the economy. 
The solutions based on the Taylor series expansion are intrinsically local, i.e. they are accurate in some neighborhood (presumably small) of the deterministic steady state. 
Out of the neighborhood the solutions may behave odd, for example,  can imply
explosive dynamics  \citep{r10}. 
The other problem with the perturbation method is that we do not know a priori for non-trivial models how small the neighborhood must be to achieve a given level of accuracy. 

The recent crisis has renewed an interest in methods that provide global solutions to DSGE models, i.e. the solutions some points of which are far away from the steady state. This may occur after a big shock hitting the economy, or if the initial conditions are far away from the steady state, the examples of this situation are  the economies in transition and developing economies. 

This study presents an approach based on a perturbation technique to construct global solutions to DSGE models. The proposed solutions are represented as a series in powers of a small parameter $\sigma$ scaling the covariance matrix of the shocks.
The zero order approximation corresponds to the solution to the deterministic model, because all shocks vanish as $\sigma =0$. Global solutions to deterministic models can be obtained reasonably fast by effective numerical methods\footnote{The algorithms incorporated in the widely-used software such as Dynare (and less available Troll) find a stacked-time solution and are based on Newton's method combined with sparse-matrix techniques \citep{r0}.}
even for large size models \citep{r36}. 
 For this reason the next stages of the method are implemented  assuming that the solution to the deterministic model under  given initial conditions is known.

Higher-order systems  depend only on quantities of lower orders, hence they can be solved recursively. The homogeneous part of these systems  is the same for all orders and
depends on the deterministic solution.  Consequently, each system can be represented as a rational expectation model with
time-varying parameters.
In the case of  rational expectations models with constant parameters  the stable block of equations can be isolated and solved forward. This is not possible for models with time-varying parameters. 

The other contribution of the present work is a  method proposed for solving general systems of linear rational expectations models with
time-varying parameters and determines the conditions under which the solutions of the method exist. 
%
The method starts with finding a finite-horizon solution by using backward recursion. Next we prove that as the horizon tends to infinity the finite-horizon solutions approach to a limit solution that is bounded for all positive time. 
The proposed method for solving linear rational expectations models with
time-varying parameters may be valuable in itself, for example,
 for solving models with anticipated structural and policy changes. 
%
%



%
Notice that whenever the deterministic solution is global in state variables so is the approximate solution to the stochastic problem. At the same time, if the parameter $\sigma$ is small enough, then the solution obtained is close to the deterministic one.
For this reason, we shall call this approach semi-global.

To illustrating how the method works 
we apply it to the asset pricing model of \cite{r16}. 
The simplicity of the model allows for obtaining the approximations in an analytical form.
We compare the policy functions  of the
second order solution of the semi-global method with the local
Taylor series expansion of orders two and six \citep{r14}. The results show that the semi-global solution is more accurate, in some sense, than even the sixth order of the local Taylor expansion.

This paper contributes to a growing literature on using the
perturbation technique for solving DSGE models. The perturbation methodology
in economics has been advanced by Judd and co-authors as in \cite{r8,r17,r18}. \cite{r7} give a theoretical basis for using perturbation
methods in DSGE modeling; namely, applying the implicit function theorem, they
prove that the perturbed rational expectations solution continuously depends
on a parameter and therefore tends to the deterministic solution as the
parameter tends to zero.  

Almost all of the literature is concerned with the approximations
around the steady state as in \cite{r3,r14,r10,r5}. 
\cite{r19} and \cite{r19_} make use of series expansion in powers of $\sigma $ 
to provide a theoretical foundation for pruning methods \citep{r10}, which is aimed to avoid the explosive behavior of a solution. \cite{r1} develop the same approach for higher-order approximations. 
Lombardo and Uhlig's approach can be treated as a special case of the method proposed in the current
study, namely a deterministic solution around which the expansion is used
 is only the steady state.  
Both approaches based on the perturbation methodology used in applied mathematics ( \cite{r88} and \cite{r20}).
The essence of the methodology  is to expand a solution in a series of powers of a small parameter, and thus obtain a set of problems that can be solved recursively. It is supposed that each of this problems is  easier to solve than the original one. Actually  in applied mathematics literature (\cite{r88} and \cite{r20}), the zeroth-order approximation is typically a function of time $t$ rather than a steady state as in \cite{r19}, \cite{r19_} and \cite{r1}. 
 \citet[Chapter 13]{r8} outlines how to apply  perturbations around
the known entire solution, which is not necessarily the steady state. He
considers a simple continuous-time stochastic growth models
in the dynamic programming framework. This paper develops an 
 approach to construct approximate solutions to  discrete-time DSGE models in  general form by using the
perturbation method around a global deterministic path.

Despite the fact that the pruning procedure avoids the explosive behavior of a solution, it remains local, and as such may have some undesirable properties. For example,  the pruning procedure might 
provide a first few impulse responses with wrong signs under a sufficiently large shock.  This case seems even worse than the explosive dynamics since the impulse responses for a first few periods are most interesting and relevant for theoretical implications of a model as well as a policy analysis; therefore, their incorrect signs could mislead a researcher or a policymaker. In this situation the pruning procedure just conceals the real problem.
As we will show in the example, the problem with a wrong sign of impulse responses can occur even in a situation where the pruning is not needed.


%

The rest of the paper is organized  as follows. The next section presents the model set-up. Section~\ref{expan} provides a detailed exposition of series expansions for DSGE models. In Section~\ref{transform} we transform  the model into a convenient form to deal with. Section~\ref{sec6} presents the method for solving  rational expectations models for time-varying parameters. 
The proposed method is applied to an asset pricing model in Section~\ref{example}, where it is also compared with the local Taylor series expansions. 
Conclusions are presented in Section~\ref{conclusion}.

\section{The Model}
\label{sec:2}
DSGE models usually have the form
\begin{eqnarray}
&E_{t}f(y_{t+1},y_{t},x_{t+1},x_{t},z_{t+1},z_{t})  =  0,\label{1}\\
&z_{t+1}  =  \Lambda z_{t} + \sigma\varepsilon _{t+1}, \label{2} 
\end{eqnarray}
where $E_{t}$ denotes the conditional expectations operator,  $x_t$ is an $n_x\times 1$ vector containing the $t$-period endogenous  state variables; 
$y_t$ is an $n_y\times 1$ vector containing the $t$-period endogenous variables that are not state variables; $z_t$ is an $n_z\times 1$ vector containing the $t$-period 
exogenous  state  variables; 
$\varepsilon _{t}$  is the vector with the corresponding innovations;  and the $ n_z\times n_z$ covariance matrix $\Omega$; 
$f$ maps ${\mathbb{R}}^{n_y}\times{\mathbb{R}}^{n_y}\times{\mathbb{R}}^{n_x}\times{\mathbb{R}}^{n_x}\times{\mathbb{R}}^{n_{z}}\times{\mathbb{R}}^{n_{z}}$   into ${\mathbb{R}}^{n_y}\times{\mathbb{R}}^{n_x}$ and is assumed to be sufficiently smooth. The scalar $\sigma$ ($\sigma>0$) is a scaling parameter for the disturbance terms $\varepsilon _{t}$. We assume that all mixed moments of $\varepsilon _{t}$ are finite. All eigenvalues of the matrix $\Lambda$ have modulus less than one.  
%
The problem 
is to find a stable solution $(x_{t}, y_{t})$ to \eqref{1} 
for a given initial condition $(x_{0}, z_{0})$. 
A process is stable if its unconditional expectations are bounded \citep{r5_}.

\section{Series Expansion}\label{expan}
\subsection{The General Case}
In this section we shall follow the perturbation methodology used in applied mathematics (see,
for example, \cite{r88} and \cite{r20}) to derive an approximate solution to the model %
\eqref{1}--\eqref{2}. 
For small $\sigma $, we assume that the
solution\footnote{As is conventional in applied mathematics literature (see for example \cite{r20}), each term of the expansion is not divided by the factorial term at this stage.}
has a particular form of expansions 
\begin{equation}  \label{5}
y_{t} = y^{(0)}_t +  \sigma y^{(1)}_t +\sigma ^{2} y^{(2)}_t + \cdots 
\end{equation}
\begin{equation}  \label{6}
x_{t} = x^{(0)}_t +  \sigma x^{(1)}_t +\sigma ^{2} x^{(2)}_t + \cdots 
\end{equation}
%
%
The exogenous process $z_{t}$ can also easily be
represented in the form of expansion in $\sigma$
\begin{equation}  \label{7}
z_{t} =z_{t}^{(0)} +\sigma z_{t}^{(1)} .
\end{equation}
Indeed, plugging \eqref{7} into \eqref{2} gives 
\begin{equation*}
z_{t+1} =z_{t+1}^{(0)} +\sigma z_{t+1}^{(1)} =\Lambda (z_{t}^{(0)} +\sigma
z_{t}^{(1)} )+\sigma \varepsilon _{t+1} .
\end{equation*}
Collecting the terms of like powers of {$\sigma $} and equating them to
zero, we get 
\begin{align}
z_{t+1}^{(0)} &=\Lambda z_{t}^{(0)} ,\label{8}\\
z_{t+1}^{(1)} &=\Lambda z_{t}^{(1)} +\varepsilon _{t+1} .\label{9}
\end{align}
Since the expansion \eqref{7} must be valid for all $\sigma $ at
the initial time $t=0$, the initial conditions are
\begin{equation}  \label{9_}
 z_{0}^{(0)} =z_{0} \quad \mbox{and} \quad z_{0}^{(1)} =0.
\end{equation}

It is worth noting that in contrast to \cite{r19_}, where $y^{(0)}_t =\bar{y}$ and $x^{(0)}_t =\bar{x}$ are the steady state, here $x^{(0)}_t$ and $y^{(0)}_t$ are functions of $t$ and, as will be shown below, are a deterministic solution of the problem. In this way the pruning procedure  is always
local around the steady state, whereas we focus on the case where $x^{(0)}_t$ and $y^{(0)}_t$ are global. 

Then substituting \eqref{5}, \eqref{6} and \eqref{7} into \eqref{1},  we have
\begin{equation}\label{9__}
\begin{array}{l} 
E_{t}f(y^{(0)}_{t+1} + \sigma y^{(1)}_{t+1} + \sigma ^{2}y^{(2)}_{t+1} + \cdots, y^{(0)}_t +  \sigma y^{(1)}_t +\sigma ^{2} y^{(2)}_t + \cdots,\\
x^{(0)}_{t+1} +  \sigma x^{(1)}_{t+1} + \sigma ^{2}x^{(2)}_{t+1} + \cdots, x^{(0)}_t +  \sigma x^{(1)}_t +\sigma ^{2} x^{(2)}_t + \cdots,\\
z_{t+1}^{(0)} +\sigma z_{t+1}^{(1)},z_{t}^{(0)} +\sigma z_{t}^{(1)})  =  0.
\end{array}
\end{equation}

Expanding the left hand side of  \eqref{9__} for small $\sigma$, collecting the terms of
like powers of $\sigma $ and setting their coefficients  to zero, we obtain

 \textbf{Coefficient of {${\mathbf{\sigma }}^{0}$}}
\begin{equation}  \label{14}
f(y^{(0)}_{t+1}, y^{(0)}_{t},x_{t+1}^{(0)} ,x_{t}^{(0)} ,z_{t+1}^{(0)} ,z_{t}^{(0)})=0.
\end{equation}
The requirement that \eqref{6} and \eqref{7}
must hold for all arbitrary small $\sigma $ implies that the initial
conditions for \eqref{14} are
\begin{equation}  \label{15}
 z_{0}^{(0)} =z_{0}\quad \mbox{and} \quad x_{0}^{(0)} =x_{0}.   
\end{equation}
The terminal conditions for $y^{(0)}_{\infty}$ and $x^{(0)}_{\infty}$ are the deterministic steady states 
\begin{equation}  \label{15_}
 y^{(0)}_{\infty} =\bar{y}\quad \mbox{and} \quad x^{(0)}_{\infty} =\bar{x}.   
\end{equation}
%
The system of equations \eqref{8} and \eqref{14} is a deterministic model since it corresponds to the model \eqref{1} and \eqref{2}, where all shocks vanish (for this reason we omit the expectations operator in \eqref{14}). The deterministic model \eqref{8} and \eqref{14} with the initial and terminal conditions \eqref{15} and \eqref{15_}, respectively, can be solved globally by a number of effective algorithms, for example the extended path method ( \cite{r15}) or a Newton-like method (for example, \cite{r23}).
As this study is primarily concerned with stochastic models, in what follows we suppose that the deterministic model is already solved and its solution  is known.
%

 \textbf{Coefficient of {${\mathbf{\sigma }}^{1}$}}
\begin{equation}\label{first_order}
\begin{array}{l}
E_{t}\big\{ f_{1,t}\cdot y_{t+1}^{(1)} + f_{2,t}\cdot y_{t}^{(1)} +f_{3,t}x_{t+1}^{(1)}   
+f_{4,t}x_{t}^{(1)} \\
+ f_{5 ,t} z_{t+1}^{(1)}+ f_{6 ,t} z_{t}^{(1)}\big\}=0.
\end{array}%
\end{equation}
 The matrices
 \[f_{i,t} =f_{i}\left(y^{(0)}_{t+1},y^{(0)}_{t}, x_{t+1}^{(0)},x_{t}^{(0)} ,z_{t+1}^{(0)} ,z_{t}^{(0)} \right), \mbox{i=1,\ldots, 6},\]
are the Jacobian matrices  of the mapping $f$ with respect to the $i$th argument (that is $y_{t+1}$, $y_{t}$, $x_{t+1}$, $x_{t}$, $z_{t+1}$, and $z_{t}$, 
respectively), at the point $\left(y^{(0)}_{t+1},y^{(0)}_{t},x_{t+1}^{(0)} ,x_{t}^{(0)} ,z_{t+1}^{(0)},z_{t}^{(0)}\right).$
The requirement that \eqref{6} and \eqref{7}
must hold for all arbitrary small $\sigma $ implies that the initial
condition for \eqref{first_order} is
\begin{equation}
 z_{0}^{(1)} =0\quad \mbox{and} \quad  x_{0}^{(1)} =0. \label{19} 
\end{equation}
%

 \textbf{Coefficient of $\mathbf{\sigma }^{n}$, $n > 1$} 
\begin{equation}
\begin{array}{l}
E_{t} \Big\{ f_{1,t}\cdot y_{t+1}^{(n)} +f_{2,t}\cdot y_{t}^{(n)} +f_{3,t}\cdot x_{t+1}^{(n)} +f_{4,t}\cdot x_{t}^{(n)} +\eta _{t+1}^{(n)} \Big\} =0.\label{18}
\end{array}%
\end{equation}
The requirement that \eqref{6} 
must hold for all arbitrary small $\sigma $ implies that the initial
condition for \eqref{18} is
\begin{equation}
 x_{0}^{(n)} =0. \label{19_} 
\end{equation}

%
%
A nice feature of the set of systems of equations \eqref{18} is that the linear homogeneous part $f_{i,t}$ is
the same for all $n > 0$. The difference is only in the non-homogeneous terms $E_{t} \eta _{t+1}^{(n)}$ that are some mappings for which the set of arguments includes only quantities of  order less than $n$
\[ \left(y^{(0)}_{t+1},y^{(0)}_{t}, x_{t+1}^{(0)} ,x_{t}^{(0)}
 ,\ldots,y^{(n-1)}_{t+1},y^{(n-1)}_{t}, x_{t+1}^{(n-1)}, x_{t}^{(n-1)}
,z_{t+1}^{(0)} ,z_{t}^{(0)} ,z_{t+1}^{(1)} ,z_{t}^{(1)}\right).\]
Particularly, for $n=1,2$ we have
\begin{equation*}
E_{t} \eta _{t+1}^{(1)} =(f_{5 ,t} \Lambda+ f_{6 ,t}) z_{t}^{(1)},
\end{equation*}
and
\begin{equation}\label{quadr}
\begin{array}{l}
E_{t} \eta _{t+1}^{(2)} =
E_{t}\big\{\frac{1}{2}f_{11,t} \left(y_{t+1}^{(1)}\right)^2+  \frac{1}{2}f_{22,t} \left(y_{t}^{(1)}\right)^2+ \frac{1}{2}f_{33,t} \left(x_{t+1}^{(1)}\right)^2\\
+  \frac{1}{2}f_{44,t} \left(x_{t}^{(1)}\right)^2+\frac{1}{2}f_{55,t} \left(z_{t+1}^{(1)}\right)^2+  \frac{1}{2}f_{66,t} \left(z_{t}^{(1)}\right)^2
+f_{12,t} y_{t+1}^{(1)}y_{t}^{(1)}\\
 +f_{13,t} y_{t+1}^{(1)}x_{t+1}^{(1)}
+ f_{14,t} y_{t+1}^{(1)}x_{t}^{(1)}   +f_{15,t} y_{t+1}^{(1)}z_{t+1}^{(1)}+f_{16,t} y_{t+1}^{(1)}z_{t}^{(1)}\\ 
 +f_{23,t} y_{t}^{(1)}x_{t+1}^{(1)}+f_{24,t} y_{t}^{(1)}x_{t}^{(1)}    +f_{25,t} y_{t}^{(1)}z_{t+1}^{(1)}+f_{26,t} y_{t}^{(1)}z_{t}^{(1)}\\
+f_{34,t} x_{t+1}^{(1)}x_{t}^{(1)}+f_{35,t} x_{t+1}^{(1)}z_{t+1}^{(1)}
+f_{36,t} x_{t+1}^{(1)}z_{t}^{(1)}
+f_{45,t} x_{t}^{(1)}z_{t+1}^{(1)}\\ + f_{46,t} x_{t}^{(1)}z_{t}^{(1)}
+f_{56,t} z_{t+1}^{(1)}z_{t}^{(1)}   \big\},
\end{array}%
\end{equation}
respectively; where $f_{ij,t}$, $i=1,\ldots, 6$, $j=1,\ldots, 6$, denotes the mixed partial Frech\'{e}t derivative of $f_{t}$ of order two with
respect to $i$th and $j$th arguments at the
point
\begin {equation}\label{19__}
\left(y^{(0)}_{t+1},y^{(0)}_{t}, x_{t+1}^{(0)} ,x_{t}^{(0)},z_{t+1}^{(0)} ,z_{t}^{(0)}\right).
\end{equation}
In other words, $f_{ij,t}$ is a bilinear mapping (see, for
example,  \citet[p.~55]{r22}) depending on vector \eqref{19__} (and hence on $t$).\footnote{We do not make use tensor notation for brevity.}
The expectations $E_{t} \eta _{t+1}^{(n)}$ are  bounded if  all conditional mixed moments of $z_{t+1}^{(1)} $ are bounded up to
order $n$ and the vectors \eqref{19__}
are bounded for all $t \ge 0$.

Equation \eqref{18} with the initial conditions \eqref{19_} is a linear rational expectations model with time-varying
parameters and bounded the non-homogeneous terms $E_{t} \eta _{t+1}^{(n)}$. To solve the problem \mbox{\eqref{18}--\eqref{19}} is equivalent to finding
 a bounded solution $(x_{t}^{(n)} ,y_{t}^{(n)} )$  for $t>0$
under the assumption that the bounded solutions to 
the problems of all  orders less than $n$ are already known. Knowing how to solve these types of model and using the structure of mappings $E_{t} \eta _{t+1}^{(n)}$, we can find recursively solutions, $y_{t}^{(n)}, x_{t}^{(n)}
$, to \eqref{18} for every order $n$, starting with $n=1$.

\subsection{An Example of the Series Expansion: An Asset Pricing Model}\label{example}

In this section the method of expansion around a deterministic path applies to a simple  nonlinear asset
pricing model proposed by \cite{r16} and analyzed by \cite{r3,r14}. 
In this model the representative agent maximizes the lifetime utility function 
\begin{equation*}
\max \left(E_{0} \sum _{t=0}^{\infty }\beta ^{t} \frac{C_{t}^{\theta } }{%
\theta } \right)
\end{equation*}
subject to 
\[ p_{t} e_{t+1} +C_{t} =p_{t} e_{t} +d_{t} e_{t} ,\]
where $\beta>0$ is a subjective discount factor, $\theta<1$ and $\theta \ne 0$,  $C_{t}$ denotes consumption, $p_{t}$ is the price at date $t$ of
a unit of the asset
, $e_{t}$  represents  units of a single asset held at the beginning of period $t$,  and $ d_{t}$ is
dividends per asset in period $t$. The growth of rate of the dividends follows an  AR(1) process\footnote{By abuse of our previous  notation, we let $x_t$ stand for the exogenous process as in \cite{r16, r3, r14}. } 
\begin{equation}  \label{48}
\mathrm{x}_{\mathrm{t}} \mathrm{\; =\; (1-\; }\rho \mathrm{)\; }\bar{\mathrm{%
x}}\mathrm{\; }+\mathrm{\; }\rho x_{t-1} +\sigma \varepsilon _{t+1} ,
\end{equation}
where $x_{t} = ln(d_{t}/ d_{t-1})$, and $\varepsilon_{t+1 } \sim  NIID(0,1)$.
The first order condition and market clearing yields the equilibrium condition
\begin{equation}  \label{49}
y_{t} =\beta E_{t} \left[\exp (\theta x_{t+1} )\left(1+y_{t+1} \right)\right],
\end{equation}
where $y_{t} = p_{t}/d_{t}$ is the price-dividend ratio. This equation has an exact solution of the form \citep{r16} 
\begin{equation}  \label{50}
y_{t} =\sum _{i=1}^{\infty }\beta ^{i} \exp \left [a_{i} +b_{i} ( x_{t} -\bar{x})\right],
\end{equation}
where 
\begin{equation}  \label{51}
a_{i} =\theta \bar{x}i+\frac{1}{2} \left(\frac{\theta \sigma }{1-\rho }
\right)^{2} \left[i-\frac{2\rho (1-\rho ^{i} )}{1-\rho } +\frac{\rho ^{2}
(1-\rho ^{2i} )}{1-\rho ^{2} } \right]
\end{equation}
 and 
\begin{equation*}
b_{i} =\frac{\theta \rho (1-\rho ^{i} )}{1-\rho } .
\end{equation*}
 It follows from \eqref{49} that the deterministic steady state of
the economy is 
\begin{equation*}
\bar{y}=\frac{\beta \exp (\theta \bar{x})}{1-\beta \exp (\theta \bar{x})} .
\end{equation*}

 We now express a solution to the system \eqref{48}--\eqref{49} as an  expansion in powers of  the
parameter $\sigma $ up to a second-order approximation and decompose the original problem into a set of auxiliary problems. 
Specifically, assume that the solution can be represented in the form:
\begin{align}
y_{t} &=y^{(0)}_{t} +\sigma y^{(1)}_{t} +\sigma ^{2}
y^{(2)} _{t} \label{55}\\
x_{t} &=x_{t}^{(0)} +\sigma x_{t}^{(1)} .\label{56}
\end{align}
Substituting \eqref{56} into \eqref{48} and collecting
the terms containing $\sigma^{0}$  and $\sigma^{1}$, we obtain the representation \eqref{56} for $x_{t}$ 
\begin{align}
x_{t+1}^{(0)} &=\mathrm{(1-\; }\rho \mathrm{)\; }\bar{\mathrm{x}}\mathrm{\; }%
+\rho x_{t}^{(0)}\label{57}\\
x_{t+1}^{(1)} &=\rho x_{t}^{(1)} +\varepsilon _{t+1}.
\end{align}
Since the expansion \eqref{56} must be valid for all $\sigma $ at
the initial time $t=0$, the initial conditions are
\begin{equation}  \label{59}
 x_{0}^{(0)} =x_{0} \quad \mbox{and} \quad x_{0}^{(1)} =0.
\end{equation}


 Substituting now \eqref{55} and \eqref{56} into \eqref{49} yields
\begin{equation*}\
\begin{split}
&y^{(0)}_{t} +\sigma y^{(1)}_{t} +\sigma ^{2} y^{(2)}_{t}+\cdots \\ 
&=\beta E_{t} \big\{ \exp \left[\theta \left(x_{t+1}^{(0)} +\sigma x_{t+1}^{(1)}\right )\right]\big[1+y^{(0)}_{t}+\sigma y^{(1)}_{t}+\sigma ^{2} y^{(2)} +\cdots \big]\big\} 
\end{split}
\end{equation*}
 Expanding exponential for small $\sigma $ gives
 \begin{equation*}
\begin{split}
&y^{(0)}_{t} +\sigma y^{(1)}_{t} +\sigma ^{2} y^{(2)}_{t}+\cdots\\
&=\beta E_{t} \exp (\theta x_{t+1}^{(0)} )\left[ 1+\sigma \theta x_{t+1}^{(1)} +\frac{1}{2} \left(\sigma \theta x_{t+1}^{(1)}\right )^{2} +\cdots \right]\big[
1+y^{(0)}_{t+1}+\sigma y^{(1)}_{t+1}+\sigma ^{2} y^{(2)}_{t+1} +\cdots\big]%
\end{split}
\end{equation*}
Collecting the terms of like powers of $\sigma$ in the last equation, we have
%

\textbf{Coefficient of $\sigma^{0}$}
\begin{align}
y_{t}^{(0)}& =\beta \exp (\theta x_{t+1}^{(0)})(1+y_{t+1}^{(0)}),\label{60}\\
x_{t+1}^{(0)} &=\rho x_{t}^{(0)}. \label{61}
\end{align}

\textbf{Coefficient of $\sigma^{1}$}
\begin{equation}  \label{62}
\begin{split}
y_{t}^{(1)}=\exp (\theta x_{t+1}^{(0)} )\beta \theta \left(1+y_{t+1}^{(0)} \right)E_{t} 
x_{t+1}^{(1)} +\exp (\theta x_{t+1}^{(0)} )\beta E_{t} y_{t+1}^{(1)},
\end{split}
\end{equation}
\begin{equation}  \label{62:2}
x_{t+1}^{(1)} =\rho x_{t}^{(1)}+\varepsilon _{t+1}.
\end{equation}

\textbf{Coefficient of $\sigma^{2}$}
\begin{equation} \label{63}
\begin{array}{l}
y_{t}^{(2)} =
\frac{1}{2}\beta \exp (\theta x_{t+1}^{(0)} )\theta ^{2} \left(1+y_{t+1}^{(0)}\right)E_{t}\left(x_{t+1}^{(1)} \right)^{2} \\
+\theta \beta \exp (\theta x_{t+1}^{(0)} )E_{t} \left(x_{t+1}^{(1)} y_{t+1}^{(1)}\right)+\beta \exp (\theta x_{t+1}^{(0)} )E_{t} \left(y_{t+1}^{(2)}\right).
\end{array}%
\end{equation}

The system \eqref{60} and \eqref{61} is a deterministic model. Its solution can easily be obtained by, for example, forward induction
\begin{equation}  \label{64}
y_{t}^{(0)} =\sum _{i=1}^{\infty }\beta ^{i} \exp \left\{ \theta \left[\bar{x}i+\frac{%
\rho (1-\rho ^{i} )}{1-\rho } ( x_{t} -\bar{x})\right]\right\}.
\end{equation}

Under the assumption that $y_{t}^{(0)}$ and $x_{t}^{(0)} $ are already
known for $t \geqslant 0$, Equations \eqref{62} and %
\eqref{62:2} constitute  a linear  rational expectations model with time varying deterministic coefficients $\exp (\theta x_{t+1}^{(0)} )\beta$. The expectations of the term $\eta_{t+1}^{(1)}$ in \eqref{18} has the form $\exp (\theta x_{t+1}^{(0)} )\beta E_{t} \left[\theta
x_{t+1}^{(1)} (1+y_{t+1}^{(0)} )\right]$.
Equation \eqref{63} is also a linear  forward-looking equation with time varying deterministic coefficients $\exp (\theta x_{t+1}^{(0)} )\beta$, and the term 
\[     E_t \eta_{t+1}^{(2)} =   
\beta \exp (\theta x_{t+1}^{(0)} )\theta \left[\frac{1}{2}\theta \left(1+y_{t+1}^{(0)}\right)E_{t}\left(x_{t+1}^{(1)} \right)^{2} 
+E_{t} \left(\theta x_{t+1}^{(1)} y_{t+1}^{(1)}\right)\right]\] 
depending only on solutions of orders less than two, i.e. $x_{t+1}^{(0)}, y_{t+1}^{(0)}, x_{t+1}^{(1)}, y_{t+1}^{(1)}$. Therefore, both the system  \eqref{62} and 
\eqref{62:2}, and Equation \eqref{63} are linear  forward-looking models with time varying coefficients. Under the condition that we know how to solve these types of model, they can be solved recursively starting with solving \eqref{62} and 
\eqref{62:2}, then passing to \eqref{63}. 
In Section~\ref{sec6} we present a method for solving such types of model and  prove the convergence of the solutions implied by the method 
to the exact solution. In the next section we transform equation \eqref{18} in a more convenient form to deal with.

\section{Transformation of the Model}\label{transform} 

Define the deterministic steady state as vectors $(\bar{y}, \bar{x},0)$ such that 
\begin{equation}
f(\bar{y},\bar{y},\bar{x},\bar{x},0,0)=0.\label{22_}
\end{equation}
We can represent $f_{i,t}$ in \eqref{18} as $f_{i,t}=f_{i}+\hat{f}_{i,t}$, $ i = 1,\ldots,6$,  where 
\[{f_{i} =f_{i}(\bar{y},\bar{y},\bar{x},\bar{x},0,0)}\]
are the Jacobian matrices  of the mapping $f$ at the steady state with respect to $i$th argument, 
and 
\begin{equation}\label{22__}
\hat{f}_{i,t} =f_{i,t}(y^{(0)}_{t+1} ,y^{(0)}_{t},x_{t+1}^{(0)},x_{t}^{(0)}
,z_{t+1}^{(0)},z_{t}^{(0)} )-f_{i}(\bar{y},\bar{y},\bar{x},\bar{x},0,0).
\end{equation}
Note also that $\hat{f}_{i,t} \to 0$ as $t \to \infty$,  because a deterministic
solution must tend to the deterministic steady state as $t$ tends to infinity. 
Consequently, $f_{i,t} $ can be thought of as a perturbation of $f_{i}$. 
As Equations \eqref{18} have the same form for all $n>0$, to shorten notation, further on we omit
the superscript $(n)$ when no confusion can arise. 
Therefore Equations \eqref{18} can be written in the vector form  
\begin{equation}\label{22}
\Phi _{t} E_{t} \left[%
\begin{array}{c}
{x_{t+1} } \\ 
{y_{t+1} }%
\end{array}%
\right]=\Lambda _{t} \left[
\begin{array}{c}
{x_{t} } \\ 
{y_{t} }%
\end{array}%
\right]+E_{t} \eta _{t+1}, 
\end{equation}
where $\Phi _{t} =\left[f_{3} +\hat{f}_{3,t} ,f_{1} +\hat{f}%
_{1,t} \right]$ and $\Lambda _{t} =\left[f_{4} +\hat{f}_{4,t} ,f_{2} +\hat{f%
}_{2,t} \right]$. We assume that the matrices $\Phi _{t}$ are
invertible for all $t\ge 0$. For instance, this assumption always holds in some neighborhood of the steady state if  the Jacobian $\left[f_{3}
,f_{1}\right]^{-1}$ at the steady state  is invertible.\footnote{This assumption is made for ease of
exposition. If  $[f_{3} ,f_{1} ]$ is a singular matrix, then further on we
must use a generalized Schur decomposition for which derivations remain
valid, but become more complicated.} 

Pre-multiplying \eqref{22} by {$\Phi _{t}^{-1} $}, we get 
\begin{equation}  \label{24}
E_{t} \left[%
\begin{array}{c}
{x_{t+1} } \\ 
{y_{t+1} }%
\end{array}%
\right]=L\left[%
\begin{array}{c}
{x_{t} } \\ 
{y_{t} }%
\end{array}%
\right]+M_{t}\left[
\begin{array}{c}
{x_{t} } \\ 
{y_{t} }%
\end{array}%
\right]+\Phi _{t}^{-1} E_{t} \eta _{t+1},
\end{equation}
where $L=\left[f_{3} ,f_{1} \right]^{-1}\left [f_{4} ,f_{2} \right]$ and 
\begin{equation*}
M_{t}=\left[f_{3} +\hat{f}_{3,t} ,f_{1} +\hat{f}_{1,t} \right]^{-1} \left[f_{4} +\hat{f%
}_{4,t} ,f_{2} +\hat{f}_{2,t}\right ]-\left[f_{3} ,f_{1} \right]^{-1} \left[f_{4} ,f_{2}\right].
\end{equation*}
Particularly, for $n=1$ we have
\begin{equation}  \label{24_}
E_{t} \left[%
\begin{array}{c}
{x_{t+1}^{(1)} } \\ 
{y_{t+1}^{(1)} }%
\end{array}%
\right]=L\left[%
\begin{array}{c}
{x_{t}^{(1)} } \\ 
{y_{t}^{(1)} }%
\end{array}%
\right]+M_{t}\left[
\begin{array}{c}
{x_{t}^{(1)} } \\ 
{y_{t}^{(1)} }%
\end{array}%
\right]+\Phi _{t}^{-1} (f_{5 ,t}\Lambda + f_{6 ,t}) z_{t}^{(1)}.
\end{equation}
Notice that $ lim_{t \to \infty}M_{t}=0$.  
As in the case of  rational expectations models with constant parameters it is
convenient to transform \eqref{24} 
using the spectral property of  $L$.  
Namely, the matrix $L$ is transformed into a block-diagonal one
%
\begin{equation}  \label{25}
 L=ZPZ^{-1} , 
\end{equation}
 where
\begin{equation} \label{25_}
P=\left[%
\begin{array}{cc}
{A} & {0} \\ 
{0} & {B}%
\end{array}%
\right],
\end{equation}
where $A$ and $B$ are 
matrices with eigenvalues larger and smaller than one (in modulus), respectively;
and $Z$ is an invertible matrix\footnote{A simple Schur triangular factorization is also possible to be employed here,
but at the cost of more complicated derivations. The block-diagonal structure of the matrix $P$ simplifies algebra}. This can be done, for example, by  initially transforming $L$ in a simple Schur form $L=Z_1 L_1 Z_1^{-1}$, where $Z_1$ is a unitary matrix, $L_1$ is an upper triangular Schur form with
the eigenvalues along the diagonal. We then transform the matrix $L_1$ in the block-diagonal Schur factorization $L_1=Z_2 P Z_2^{-1}$, where $Z_2$ is an invertible matrix and $P$ is block-diagonal and each diagonal block is a quasi upper-triangular Schur matrix\footnote{The function bdschur of Matlab Control System Toolbox performs this factorization.}. Hence the matrix $Z$ in \eqref{25} has the form $Z=Z_1Z_2$. 
We
also impose the conventional Blanchard-Kan condition (\cite{r25}) on the dimension of the unstable subspace
, i.e., $dim (B) =n_y$.

 After introducing the auxiliary variables 
\begin{equation}\label{26_0}
[s_{t} ,u_{t} ]^{\prime }=Z^{-1} [x_{t} ,y_{t} ]^{\prime }
\end{equation}
and pre-multiplying %
\eqref{24} by $Z^{-1} $, we have 
\begin{align}
E_{t} s_{t+1} =As_{t} +Q_{11,t} s_{t} +Q_{12,t} u_{t} +\Psi
_{1t} E_{t} \eta _{t+1},\label{26}\\
E_{t} u_{t+1} =Bu_{t} +Q_{21,t} s_{t} +Q_{22,t} u_{t} +\Psi
_{2t} E_{t} \eta _{t+1},\label{26_}
\end{align}
where $[\Psi_{1,t}, \Psi_{2,t}] = Z \Phi_{t}^{-1} $ and 
\begin{equation}
 \left[%
\begin{array}{cc}
{Q_{11,t} } & {Q_{12,t} } \\ 
{Q_{21,t} } & {Q_{22,t} }%
\end{array}%
\right]=ZM_{t} Z^{-1}.\label{26_2}
\end{equation}
Particularly, for $n=1$, we have
\begin{align*}
E_{t} s_{t+1}^{(1)} =As_{t}^{(1)} +Q_{11,t} s_{t}^{(1)} +Q_{12,t} u_{t}^{(1)} +{\Pi}_{1,t}z_{t}^{(1)},\\
E_{t} u_{t+1}^{(1)} =Bu_{t}^{(1)} +Q_{21,t} s_{t}^{(1)} +Q_{22,t} u_{t}^{(1)} +{\Pi}_{2,t}z_{t}^{(1)},
\end{align*}
where 
\begin{equation*}
 \left[%
\begin{array}{cc}
{\Pi}_{1,t}  \\ 
{\Pi}_{2,t} %
\end{array}%
\right]=\Phi _{t}^{-1} (f_{5 ,t}\Lambda + f_{6 ,t}) .
\end{equation*}

System \eqref{26}-\eqref{26_} is a linear rational
expectations model with time-varying parameters, therefore to solve the system we cannot apply 
the approaches used in the case of 
models with constant parameters (\cite{r25,r30, r32, r5_,r31},  etc.).
In Subsection~\ref{sec6.1} we 
develop a method for solving this type of models. 

 \section{Solving the Rational Expectations Model with Time-Varying Parameters} \label{sec6}
\subsection{Notation}\label{not}
 This subsection introduces some notation that will be necessary further on. 
By $\lvert \cdot \rvert$ denote the Euclidean norm in ${\mathbb R}^{n}$. The induced norm for a real matrix $D$ is
 defined by
\begin{equation*}
\lVert D\rVert =\sup_{\lvert s\rvert = 1}
 \lvert \ Ds \rvert.
\end{equation*}
The matrix $Z$ in (\ref{25})   can be chosen in such a way that 
\begin{equation}\label{27__}
\lVert A\rVert < \alpha + \gamma < 1  \text{ and }  \lVert B^{-1}\rVert < \beta + \gamma < 1,
\end{equation}
where $\alpha$ and $\beta$ are the largest eigenvalues (in modulus) of the matrices $A$ and $B^{-1}$, respectively, and $\gamma$  is arbitrarily small. 
This follows from the same
arguments as in  \citet[\S IV 9]{r6}, where it is done for the Jordan matrix
decomposition. 
Note also that ${\lVert B\rVert}^{-1} < 1$ for sufficiently small $\gamma$. Let 
\begin{equation}
B_{t} =B+Q_{22,t},  \quad  A_{t} =A+Q_{11,t}. \label{5.2}
\end{equation}
By definition, put
\begin{align}
&a=\mathop{\sup }\limits_{t=0,1,\dots} \left\| A_{t} \right\|,\quad b=\mathop{\sup }%
\limits_{t=0,1,\ldots} \left\| B_{t}^{-1} \right\|  , \label{27_0}\\
&c=\mathop{\sup }\limits_{t=0,1,\ldots} \left\| Q_{12,t} \right\|, \quad d=%
\mathop{\sup }\limits_{t=0,1,\ldots} \left\| Q_{21,t} \right\| .\label{27}
\end{align}
In the sequel, we assume that all the matrices $B_{t}$, $t=0,1,\ldots,$ are invertible. Note that the
numbers $a$, $b$, $c$ and $d$ depend on the initial conditions $(x_{0}^{(0)} ,z_{0}^{(0)})$. From the definitions of 
$A_{t}$, $A$, $B_{t}$, $B$, $Q_{12,t}$ and $Q_{21,t}$ and the condition $lim_{t \to \infty}(x_{t}^{(0)} ,z_{t}^{(0)})= (\bar{x},0)$, it follows that
\begin{equation}  \label{28}
\mathop{\lim }\limits_{t\to \infty}c(x_{t}^{(0)} ,z_{t}^{(0)})=0,\quad \mathop{\lim }\limits_{t\to \infty} d(x_{t}^{(0)} ,z_{t}^{(0)} )=0,
\end{equation}
\begin{equation*}
\mathop{\lim }\limits_{t\to \infty}
a(x_{t}^{(0)} ,z_{t}^{(0)} )=\left\| A\right\| <1,\quad \mathop{\lim }\limits_{t\to \infty
} b(x_{t}^{(0)}
,z_{t}^{(0)} )=\left\| B^{-1} \right\| <1.
\end{equation*}
This means that $c$ and $d$ can be arbitrary small and 
\begin{equation} 
 a <1 \quad \mbox{and}\quad b < 1 \label{29}
\end{equation}
by choosing $(x_{0}^{(0)} ,z_{0}^{(0)} )$ close enough to
the steady state.

\subsection{Solving the transformed system \eqref{26}--\eqref{26_} }\label{sec6.1}
Taking into account notation \eqref{5.2}, we can rewrite \eqref{26}--\eqref{26_} in the form
\begin{align}
E_{t} s_{t+1} =A_{t} s_{t} +Q_{12,t} u_{t} +\Psi _{1,t} E_{t}
\eta _{t+1} ,\label{30}\\
E_{t} u_{t+1} =B_{t} u_{t} +Q_{21,t} s_{t} +\Psi _{2,t} E_{t}
\eta _{t+1} \label{31}.
\end{align}
In this subsection we construct a bounded solution to \eqref{30}--\eqref{31} for $t\ge 0$ with an arbitrary initial condition $s_{0} \in {\mathbb{R}}^{n_{x}}$ and find under which conditions this solution exists.  For
this purpose,  we first start
with solving a finite-horizon problem with a fixed terminal condition using backward recursion. Then, we prove
the convergence of the obtained finite-horizon solutions to a bounded infinite-horizon one as the terminal time $T$ tends to infinity. 

Fix a horizon $T > 0$. 
At the time $T$ using the  invertibility of $B_{T}$ 
and solving  Equation \eqref{31} backward, we can obtain $u_{T }$ as a linear function of  $s_{T}$, the terminal condition $E_{T} u_{T+1} $ and the ``exogenous'' term $\Psi
_{2,T} E_{T} \eta _{T+1} $ 
\begin{equation*}
u_{T} =-B_{T}^{-1} Q_{21,T} s_{T} -B_{T}^{-1} \Psi _{2,t} E_{T}
\eta _{T+1} +B_{T}^{-1} E_{T} u_{T+1} .
\end{equation*}
Proceeding further with  backward recursion, we shall obtain finite-horizon
solutions for each $t = 0,1,2,\dots, T.$ For doing this we need to
define the following recurrent sequence of matrices:  
\begin{equation}  \label{32}
K_{T,T-i-1} =L_{T,T-i}^{-1} \left(Q_{21,T-i} +K_{T,T-i} A_{T-i}\right ), \quad i = 0,1,\dots ,T, 
\end{equation}
where 
\begin{equation}  \label{33}
L_{T,T-i} =B_{T-i}+ K_{T,T-i} Q_{12,T-i},
\end{equation}
with the terminal condition $K_{T,T+1} = 0$. In \eqref{32} and \eqref{33} the first subscript $T$ defines the time horizon, while the second subscript defines all times between $0$ and 
$T+1$. Let  $u_{T,T-i}$, $i=0,1,\ldots, T,$ denote the $(T-i)$-time solution obtained by backward recursion  that starts at the time $T$. The matrices \eqref{32} and \eqref{33} are needed for constructing approximate solutions by backward recursion. 

\begin{proposition}\label{prop1}
Suppose that the sequence of matrices %
\eqref{32} and  \eqref{33} exists; then the solution to \eqref{30}--\eqref{31} has the following representation:
\begin{equation}  \label{35}
 u_{T,T-i} =-K_{T,T-i} s_{T-i} +g_{T,i} + \left(\prod _{k=1}^{i+1}L_{T,T-i+k}^{-1}\right)
E_{T-i} \left(u_{T+1} \right),
\end{equation}
 where $i = 0, 1, \dots , T;$ and 
\begin{equation}  \label{36}
g_{T,i} =-\sum _{j=1}^{i+1}
\prod _{k=1}^{j}L_{T,T-i+k}^{-1}
(\Psi _{2,T-i+j}
+K_{T,T-i+j} \Psi _{1,T-i+j} )E_{T-i}
\eta _{T-i+j}.
\end{equation}
\end{proposition}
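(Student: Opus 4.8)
The plan is to establish the representation \eqref{35}--\eqref{36} by backward induction on the depth $i = 0,1,\dots,T$, propagating the solution from the prescribed terminal value $E_{T}u_{T+1}$ back toward $t=0$. The inductive hypothesis at level $i$ is precisely that the time-$(T-i)$ solution $u_{T,T-i}$ obeys \eqref{35} with the forcing term \eqref{36}. The only structural input at each step is the invertibility of the matrix $L_{T,T-i}$ of \eqref{33}; this is exactly the standing hypothesis that the sequence \eqref{32}--\eqref{33} exists, so each backward pass is well defined.

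For the base case $i=0$ I would solve \eqref{31} at $t=T$ for the unknown $u_{T}$ directly. Since the terminal condition $K_{T,T+1}=0$ collapses the relevant $L$-block to an invertible $B$-matrix, \eqref{31} can be inverted to express $u_{T}$ as an affine function of $s_{T}$, of the terminal value $E_{T}u_{T+1}$, and of the single forcing term $E_{T}\eta_{T+1}$. Matching these three coefficients against $-K_{T,T}$, the single-factor product in \eqref{35}, and the lone ($j=1$) summand of $g_{T,0}$ verifies the case $i=0$.

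For the inductive step I would pass from level $i$ (time $T-i$) to level $i+1$ (time $T-i-1$) in three moves. First, apply the conditional expectation $E_{T-i-1}$ to the level-$i$ identity; by the tower property $E_{T-i-1}E_{T-i}=E_{T-i-1}$ the homogeneous term keeps its product of inverse $L$-factors while $E_{T-i}u_{T+1}$ collapses to $E_{T-i-1}u_{T+1}$, and every $E_{T-i}\eta_{\bullet}$ inside $g_{T,i}$ is replaced by $E_{T-i-1}\eta_{\bullet}$. Second, eliminate the one-step-ahead quantities at $t=T-i-1$ by substituting $E_{T-i-1}s_{T-i}$ from \eqref{30} and $E_{T-i-1}u_{T-i}$ from \eqref{31}. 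Third, collect every term proportional to the still-unknown $u_{T-i-1}$: by construction their coefficient is exactly the matrix $L$ defined in \eqref{33} for this step, and left-multiplying by its inverse solves for $u_{T-i-1}$. Reading off the resulting coefficients reproduces all three recursions simultaneously---the $s_{T-i-1}$ coefficient is the update \eqref{32} for $K$, the homogeneous coefficient is the previous product left-multiplied by one further factor $L^{-1}$, and the inhomogeneous part is the previous sum left-multiplied by $L^{-1}$ with one new $\eta$-summand prepended, which is $g_{T,i+1}$ as displayed in \eqref{36}.

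I expect the main obstacle to be the index bookkeeping in this last move rather than any genuine analytic difficulty. One must verify that the newly introduced factor $L^{-1}$ distributes correctly over the telescoping product multiplying $E_{T-i-1}u_{T+1}$ and over every summand of the accumulated forcing term, so that in \eqref{36} the ranges of $j$ and $k$ shift consistently as $i$ increases by one; a single misplaced time index in $B$, $Q_{12}$, $Q_{21}$ or $A$ is easy to introduce and must be reconciled against the conventions fixed in \eqref{32}--\eqref{33}. A secondary point worth stating explicitly is the legitimacy of applying $E_{T-i-1}$ to the level-$i$ representation: this rests only on the finiteness of the conditional moments of the $\eta$-terms, already used when the non-homogeneous terms $E_{t}\eta_{t+1}$ were taken to be bounded, together with the tower property.
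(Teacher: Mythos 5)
Your proposal is correct and takes essentially the same route as the paper's own proof: backward induction on $i$, with the base case obtained by inverting $B$ in \eqref{31} at $t=T$, and the inductive step carried out via the law of iterated expectations, substitution of \eqref{30} for the expected $s$-term, and collection of the unknown $u_{T,T-i-1}$ terms into the matrix $L_{T,T-i}$ of \eqref{33}, which is then inverted. The only cosmetic difference is ordering (you apply $E_{T-i-1}$ to the level-$i$ identity and then eliminate via \eqref{31}, whereas the paper starts from \eqref{31} at $t=T-i-1$ and substitutes the induction hypothesis), which yields identical algebra.
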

For the proof see 
Appendix~\ref{A}. 

The sequence of matrices \eqref{32} exists if all matrices 
$L_{T,T-i}$, $i=0,1,\dots ,T, $ are invertible. For this
we need, in addition, some boundedness condition on
the matrices $B_{T-i}^{-1}K_{T,T-i+1}Q_{12,T-i}$. From \eqref{28} the  matrices $B_{T-i}^{-1}$ and $Q_{12,T-i}$ are bounded, hence this condition boils down to the boundedness of matrices $K_{T,T-i+1}$.
\begin{proposition}\label{prop2}
If for $a$, $b$, $c$ and $d$ from \eqref{27_0}--\eqref{27} the inequality
\begin{equation}  \label{38}
cd<\frac{1}{4}\left(\frac{1}{b}-a\right)^{2}=\left(\frac{1-ab}{2b}\right)^{2}
\end{equation}
holds, then 
\begin{equation}\label{38_}
\left\| B_{T-i}^{-1} \right\| \cdot \left\| K_{T,T-i+1} \right\| \cdot\left\| Q_{12,T-i} \right\| <1, \quad i=0,1,2,\ldots T.
\end{equation}
\end{proposition}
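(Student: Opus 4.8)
The plan is to reduce the matrix recursion \eqref{32}--\eqref{33} to a scalar majorizing recursion for the norms and then to study the associated scalar fixed-point map. Writing $\kappa_{i}=\lVert K_{T,T+1-i}\rVert$, so that $\kappa_{0}=\lVert K_{T,T+1}\rVert=0$ by the terminal condition, I would first bound $\lVert L_{T,m}^{-1}\rVert$ by a Neumann series: since $L_{T,m}=B_{m}\bigl(I+B_{m}^{-1}K_{T,m}Q_{12,m}\bigr)$ and $\lVert B_{m}^{-1}K_{T,m}Q_{12,m}\rVert\le bc\,\lVert K_{T,m}\rVert$, whenever $bc\,\lVert K_{T,m}\rVert<1$ the factor $L_{T,m}$ is invertible with $\lVert L_{T,m}^{-1}\rVert\le b/(1-bc\,\lVert K_{T,m}\rVert)$. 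Substituting into \eqref{32} and using submultiplicativity together with \eqref{27_0}--\eqref{27} gives the scalar inequality $\kappa_{i+1}\le\phi(\kappa_{i})$, where $\phi(\kappa)=b(d+a\kappa)/(1-bc\kappa)$.

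Next I would analyse $\phi$ on the interval $[0,1/(bc))$. A short computation gives $\phi'(\kappa)=(ab+b^{2}cd)/(1-bc\kappa)^{2}>0$, so $\phi$ is strictly increasing there. Its fixed points solve $\phi(\kappa)=\kappa$, i.e.\ the quadratic $bc\,\kappa^{2}-(1-ab)\kappa+bd=0$, whose discriminant is $(1-ab)^{2}-4b^{2}cd$. This is nonnegative precisely when $cd\le\bigl((1-ab)/(2b)\bigr)^{2}$, which is exactly the hypothesis \eqref{38}; under the strict inequality there are two distinct positive roots $\kappa_{-}<\kappa_{+}$, with $\kappa_{-}=\bigl[(1-ab)-\sqrt{(1-ab)^{2}-4b^{2}cd}\bigr]/(2bc)$.

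The core step is then a self-consistent induction showing $\kappa_{i}<\kappa_{-}$ for every $i$. The base case $\kappa_{0}=0<\kappa_{-}$ is immediate. For the inductive step, assuming $\kappa_{i}<\kappa_{-}$ I note that $bc\kappa_{i}<bc\kappa_{-}=\tfrac{1}{2}\bigl[(1-ab)-\sqrt{(1-ab)^{2}-4b^{2}cd}\bigr]\le(1-ab)/2<1$; this guarantees both that the Neumann bound above is legitimate (so $L_{T,\,\cdot}$ is genuinely invertible and the recursion is well defined at this step) and that $\phi(\kappa_{i})$ lies in the domain. Since $\phi$ is increasing and fixes $\kappa_{-}$, monotonicity yields $\kappa_{i+1}\le\phi(\kappa_{i})<\phi(\kappa_{-})=\kappa_{-}$, closing the induction. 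Hence $\lVert K_{T,m}\rVert<\kappa_{-}$ for all $m=0,1,\dots,T$, while $\lVert K_{T,T+1}\rVert=0$.

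Finally I would conclude \eqref{38_}: using $\lVert B_{T-i}^{-1}\rVert\le b$, $\lVert Q_{12,T-i}\rVert\le c$ and the bound just established for $\lVert K_{T,T-i+1}\rVert$, the product is at most $bc\,\lVert K_{T,T-i+1}\rVert<bc\,\kappa_{-}\le(1-ab)/2<1$ (and equals $0$ when $i=0$). The main obstacle I anticipate is precisely the circularity flagged in the text: the Neumann estimate used to propagate the bound presupposes invertibility of $L_{T,\,\cdot}$, which is itself what the bound is meant to certify. The way around it is to carry invertibility and the inequality $\kappa_{i}<\kappa_{-}$ through the same induction, so that each newly required inverse is justified by the bound proved one step earlier; the strict inequality in \eqref{38} is exactly what leaves the slack $bc\kappa_{-}\le(1-ab)/2<1$ needed to keep the iteration strictly inside the domain of $\phi$.
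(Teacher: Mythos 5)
Your proposal is correct and follows essentially the same route as the paper: your majorizing map $\phi(\kappa)=b(d+a\kappa)/(1-bc\kappa)$ is exactly the paper's scalar difference equation \eqref{73}, your $\kappa_{-}$ is the paper's stable fixed point $s_{1}^{*}$ of Lemma~\ref{lem1}, and your final estimate $bc\,\kappa_{-}\le(1-ab)/2<1$ is the paper's inequality \eqref{76_1}. The only difference is presentational: you prove the monotone-induction claim and interleave the invertibility of $L_{T,\cdot}$ with the norm bound inside one induction, where the paper delegates the former to Lemma~\ref{lem1} (``by direct calculation'') and the latter to Proposition~\ref{prop3}.
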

 For the proof see Appendix~\ref{A}. 
\begin{proposition}\label{prop3}
If the inequality \eqref{38_} holds, then the matrices $L_{T,T-i}$, $i=0,1,2.\ldots,T$, are invertible.
\end{proposition}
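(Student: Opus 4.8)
The plan is to reduce the invertibility of each $L_{T,T-i}$ to a small-perturbation (Neumann-series) argument built on the invertibility of $B_{T-i}$ and on the norm bound \eqref{38_} furnished by Proposition~\ref{prop2}. The analytic content is minimal; the work is in organizing the interlocked recursion correctly.

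First I would factor $B_{T-i}$ out of the definition \eqref{33}. Since every $B_t$ is assumed invertible (the standing assumption of Subsection~\ref{not}), I can write
\[
L_{T,T-i} = B_{T-i}\left(I + B_{T-i}^{-1} K_{T,T-i} Q_{12,T-i}\right),
\]
so that $L_{T,T-i}$ is invertible if and only if the bracketed factor $I + N_{T-i}$ is, where $N_{T-i} = B_{T-i}^{-1} K_{T,T-i} Q_{12,T-i}$. Then I would invoke the standard fact that $I + N$ is invertible whenever $\|N\| < 1$, with inverse given by the convergent Neumann series $\sum_{k \ge 0} (-N)^{k}$. This reduces the claim to the estimate
\[
\|N_{T-i}\| \le \|B_{T-i}^{-1}\|\,\|K_{T,T-i}\|\,\|Q_{12,T-i}\| < 1,
\]
where the first inequality is submultiplicativity of the induced norm and the second is precisely the hypothesis \eqref{38_}. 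This would also yield the explicit formula $L_{T,T-i}^{-1} = (I + N_{T-i})^{-1} B_{T-i}^{-1}$, which is convenient to have on hand for the product and $g_{T,i}$ terms of \eqref{35}--\eqref{36}.

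The main obstacle is bookkeeping rather than estimation: the matrix $K_{T,T-i}$ entering $L_{T,T-i}$ is itself produced by the recursion \eqref{32} only after $L_{T,T-i+1}$ is known to be invertible, so invertibility of the $L$'s is not a standalone fact but part of the recursion's own well-definedness. I would therefore organize the argument as a backward induction on $i$, seeded by the terminal datum $K_{T,T+1}=0$: at the initial step the relevant $L$ collapses to a $B$-matrix and is invertible outright, and at the inductive step the invertibility of $L_{T,T-i+1}$ guarantees that $K_{T,T-i}$ is well-defined, after which the factorization above delivers invertibility of $L_{T,T-i}$. Along the way I would pay close attention to the subscript alignment between \eqref{33} and \eqref{38_}, confirming that the index on $K$ appearing in $L_{T,T-i}$ is exactly the one whose norm Proposition~\ref{prop2} controls, so that the chain $c,d$ small $\Rightarrow$ \eqref{38_} $\Rightarrow$ $\|N_{T-i}\|<1$ $\Rightarrow$ invertibility holds uniformly for $i=0,1,\ldots,T$.
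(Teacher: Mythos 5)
Your proposal is correct and follows essentially the same route as the paper: factor out $B_{T-i}$ to write $L_{T,T-i}=B_{T-i}\left(I+B_{T-i}^{-1}K_{T,T-i}Q_{12,T-i}\right)$, bound the norm of the perturbation term by submultiplicativity together with \eqref{38_}, and conclude invertibility of the bracketed factor by the Neumann-series fact (the paper cites this as Lemma 2.3.3 of Golub and Van Loan). Your added backward-induction bookkeeping for the well-definedness of the $K_{T,\cdot}$ sequence is a sensible refinement, but it is not a different argument; the paper handles that issue by making existence of the sequence a hypothesis of Proposition~\ref{prop1}.
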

\begin{proof}
From \eqref{33} and the invertibility of $B_{T-i}$ it follows that
\begin{equation}  
L_{T,T-i} =B_{T-i}\left(I+ B_{T-i}^{-1}K_{T,T-i} Q_{12,T-i}\right).
\end{equation}
The matrices $L_{T,T-i}$ are  invertible if and only if the matrices $\left(I+ B_{T-i}^{-1}K_{T,T-i} Q_{12,T-i}\right)$ are invertible.
From the norm property and \eqref{38_} we have
\begin{equation*}
\left\| B_{T-i}^{-1} K_{T,T-i+1} Q_{12,T-i} \right\| \le \left\|
B_{T-i}^{-1} \right\| \cdot \left\| K_{T,T-i+1} \right\| \cdot \left\|
Q_{12,T-i} \right\| <1. 
\end{equation*}
The invertibility of $\left(I+ B_{T-i}^{-1}K_{T,T-i} Q_{12,T-i}\right)$ now follows   from \citet[Lemma 2.3.3]{r26}. 
\end{proof}
 For $i=T$ from \eqref{35} we have 
\begin{equation}  \label{39}
 u_{T,0} =-K_{T,0} s_{0} +g_{T,T} + \left(\prod _{k=1}^{T+1}L_{T,k}^{-1}\right)
E_{0} \left(u_{T+1} \right).
\end{equation}
This is a finite-horizon solution to the rational expectations model with time-varying coefficients \eqref{30}--\eqref{31} and with a given initial condition $s_0$.
What is left is to show that the solution $u_{T,0}$ of the form \eqref{39} converges to some limit as $T \to \infty$.
\begin{proposition}\label{prop4}
If  inequality \eqref{38} holds, then the limit  
\[\mathop{\lim }%
\limits_{T\to \infty } K_{T,j} =K_{\infty ,j}  \quad \mbox{for} \quad j=0,1,2,\ldots\]
exists in the matrix space defined in Subsection~\ref{not}. 
\end{proposition}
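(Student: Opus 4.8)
The plan is to recast the backward recursion \eqref{32}--\eqref{33} as the iteration of a single time-dependent matrix Riccati map and then to show that, for each fixed $j$, the sequence $\{K_{T,j}\}_{T\ge j}$ is Cauchy in the (complete, finite-dimensional) matrix space of Subsection~\ref{not}. For each time $t$ define
\[
\mathcal{R}_{t}(K)=\left(B_{t}+KQ_{12,t}\right)^{-1}\left(Q_{21,t}+KA_{t}\right),
\]
so that \eqref{32}--\eqref{33} read $K_{T,t-1}=\mathcal{R}_{t}(K_{T,t})$ and the terminal condition $K_{T,T+1}=0$ yields the closed form $K_{T,j}=\Phi_{j+1,T+1}(0)$, where $\Phi_{j+1,T+1}=\mathcal{R}_{j+1}\circ\mathcal{R}_{j+2}\circ\cdots\circ\mathcal{R}_{T+1}$ is a composition of $T+1-j$ maps. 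The first observation is that for any horizon $T'>T$ the value $K_{T',j}$ is produced by \emph{exactly the same} maps applied to a different seed, namely $K_{T',j}=\Phi_{j+1,T+1}\!\left(K_{T',T+1}\right)$, since $K_{T',T+1}=\Phi_{T+2,T'+1}(0)$. Hence $K_{T,j}$ and $K_{T',j}$ differ only through the seeds $0$ and $K_{T',T+1}$ fed into the common map $\Phi_{j+1,T+1}$, and the whole matter reduces to showing that $\Phi_{j+1,T+1}$ strongly contracts the distance between seeds as $T\to\infty$.

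The argument then rests on two estimates, both uniform in $t$. First, an \emph{a priori} bound: I would show that the closed ball $\{\,\left\|K\right\|\le\kappa_{-}\,\}$, with $\kappa_{-}$ the smaller root of the quadratic $bc\,\kappa^{2}-(1-ab)\kappa+bd=0$, is invariant under every $\mathcal{R}_{t}$. The discriminant condition $(1-ab)^{2}\ge 4b^{2}cd$ for $\kappa_{-}$ to be real is precisely inequality \eqref{38}, and a short norm computation using \eqref{27_0}--\eqref{27} shows $\left\|\mathcal{R}_{t}(K)\right\|\le\kappa_{-}$ whenever $\left\|K\right\|\le\kappa_{-}$; this is the substance of Proposition~\ref{prop2}, and since both seeds $0$ and $K_{T',T+1}$ lie in the ball, the entire orbit stays in it. Second, a contraction estimate: a one-line manipulation of the defining relations gives the identity
\[
\mathcal{R}_{t}(K)-\mathcal{R}_{t}(K')=\left(B_{t}+KQ_{12,t}\right)^{-1}(K-K')\left(A_{t}-Q_{12,t}\,\mathcal{R}_{t}(K')\right),
\]
which together with $\left\|(B_{t}+KQ_{12,t})^{-1}\right\|\le b/(1-bc\kappa_{-})$ and $\left\|\mathcal{R}_{t}(K')\right\|\le\kappa_{-}$ yields, on the invariant ball, the uniform Lipschitz bound $\left\|\mathcal{R}_{t}(K)-\mathcal{R}_{t}(K')\right\|\le\rho\,\left\|K-K'\right\|$ with $\rho=b(a+c\kappa_{-})/(1-bc\kappa_{-})$.

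Combining the two estimates finishes the proof. The key algebraic fact is that $\rho<1$ is equivalent to $2bc\,\kappa_{-}<1-ab$, i.e. to $\kappa_{-}$ lying strictly left of the vertex $(1-ab)/(2bc)$ of the quadratic, which holds exactly when \eqref{38} is strict (equivalently, when the discriminant is positive). Since $\Phi_{j+1,T+1}$ is a composition of $T+1-j$ maps each $\rho$-contractive on the invariant ball,
\[
\left\|K_{T',j}-K_{T,j}\right\|=\left\|\Phi_{j+1,T+1}\!\left(K_{T',T+1}\right)-\Phi_{j+1,T+1}(0)\right\|\le\rho^{\,T+1-j}\left\|K_{T',T+1}\right\|\le\rho^{\,T+1-j}\kappa_{-},
\]
which tends to $0$ as $T\to\infty$, uniformly in $T'>T$. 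Thus $\{K_{T,j}\}_{T}$ is Cauchy and converges to some $K_{\infty,j}$, the space being complete.

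I expect the main obstacle to be the contraction step, not the bookkeeping. One must produce the clean difference identity above, control $(B_{t}+KQ_{12,t})^{-1}$ uniformly in $t$ (which needs the strict form of \eqref{38} so that $bc\kappa_{-}$ stays bounded away from $1$), and then verify the nontrivial fact that the resulting Lipschitz constant drops below $1$ precisely under \eqref{38}. Establishing invariance of the ball, so that this Lipschitz estimate is valid along the entire orbit $\Phi_{j+1,T+1}$ and not merely at its endpoints, is the other delicate point; once both are in hand, the conclusion follows from the geometric-series telescoping displayed above.
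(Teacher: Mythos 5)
Your proof is correct, and it takes a genuinely different route from the paper's, even though the per-step algebra coincides: your difference identity $\mathcal{R}_t(K)-\mathcal{R}_t(K')=(B_t+KQ_{12,t})^{-1}(K-K')\left(A_t-Q_{12,t}\mathcal{R}_t(K')\right)$ is exactly the identity the paper derives in its proof (its $Z_{T,T-i}^{-1}B_{T-i}^{-1}$ equals your $(B_t+KQ_{12,t})^{-1}$). The divergence is in what is done with that identity. The paper bounds the resulting Lipschitz factor crudely, via $bc\lVert K\rVert\le\frac{1-ab}{2}$, by $\frac{1+ab}{2}\cdot\frac{2}{1+ab}=1$, i.e. it obtains only \emph{non-expansiveness} of the backward recursion; the geometric decay is then injected at the terminal end by comparing consecutive horizons $T$ and $T+1$ and using $\lVert K_{T,T+1}-K_{T+1,T+1}\rVert\le b\lVert Q_{21,T+2}\rVert\le bC(\alpha+\theta)^{T+2}$, which rests on the exponential convergence of the deterministic path to the steady state (\citet[Corollary 5.1]{r6} together with differentiability of $Q_{21}$ in the state variables). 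You instead extract a \emph{uniform} contraction: on the invariant ball of radius $\kappa_-$ (which is the paper's $s_1^{*}$ from Lemma~\ref{lem1}, so your invariance step is indeed the substance of Proposition~\ref{prop2}), strictness of \eqref{38} gives $2bc\kappa_-<1-ab$, hence $\rho=b(a+c\kappa_-)/(1-bc\kappa_-)<1$, and the smallness comes from the length $T+1-j$ of the composition via your same-maps-different-seeds framing, with no reference whatsoever to the decay of $Q_{21,t}$. What each buys: your argument is self-contained and more general, since it uses only the sup-bounds $a,b,c,d$ and \eqref{38} and therefore covers time-varying coefficients that never settle down to steady-state values, and it yields an explicit contraction rate $\rho$ expressible in $a,b,c,d$; the paper's argument gets by with the weaker non-expansive per-step estimate but imports structure external to the proposition's hypotheses (exponential convergence of the deterministic solution), obtaining in exchange a rate $r=\alpha+\theta$ that is uniform in $j$, whereas your bound $\rho^{\,T+1-j}\kappa_-$ weakens as $j$ approaches $T$ — immaterial for the proposition as stated, which fixes $j$ and lets $T\to\infty$.
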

 For the proof see Appendix~\ref{A}. 
\begin{proposition}\label{prop5}
If  inequality \eqref{38_} holds, then 
 \begin{equation}  \label{40}
\lim_{T \to \infty} \prod _{k=1}^{T+1}L_{T,k}^{-1} = 0
\end{equation}
 and
  \begin{equation}  \label{41}
\lim_{T \to \infty} g_{T,T}= g_{\infty},
\end{equation}
where $g_{\infty}$ is some vector in $\mathbb{R}^{n_{y}}$.
 \end{proposition}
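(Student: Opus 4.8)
The plan is to deduce both \eqref{40} and \eqref{41} from a single quantitative estimate: the products $\prod_{k=1}^{j}L_{T,k}^{-1}$ decay geometrically in $j$, uniformly in the horizon $T$. Granting this, \eqref{40} is the case $j=T+1$, in which the number of factors grows with $T$, while \eqref{41} follows by dominating the summands of $g_{T,T}$ in \eqref{36} and passing to the limit term by term with the help of Proposition~\ref{prop4}.

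First I would collect the available estimates. The proof of Proposition~\ref{prop2} produces, under \eqref{38}, a uniform bound $\lVert K_{T,j}\rVert\le\kappa$ for all $T$ and $j$, where $\kappa$ is the smaller root of $bc\,\kappa^{2}-(1-ab)\kappa+bd=0$; in particular $bc\,\kappa\le\tfrac12(1-ab)<\tfrac12$. By \eqref{5.2} and \eqref{28}, $B_{k}=B+Q_{22,k}$ with $Q_{22,k}\to0$ and $\lVert Q_{12,k}\rVert\to0$, while $\lVert B^{-1}\rVert<1$ by \eqref{27__}. Writing $L_{T,k}=B_{k}\bigl(I+B_{k}^{-1}K_{T,k}Q_{12,k}\bigr)$ and invoking the Neumann bound gives
\[
\bigl\lVert L_{T,k}^{-1}\bigr\rVert\le\frac{\lVert B_{k}^{-1}\rVert}{1-\lVert B_{k}^{-1}K_{T,k}Q_{12,k}\rVert}\le\frac{\lVert B_{k}^{-1}\rVert}{1-bc\,\kappa},
\]
the denominator being positive by the previous line and by the definitions \eqref{27_0}--\eqref{27}. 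Since $\lVert B_{k}^{-1}\rVert\to\lVert B^{-1}\rVert<1$ and $\lVert B_{k}^{-1}K_{T,k}Q_{12,k}\rVert\le b\kappa\lVert Q_{12,k}\rVert\to0$, there are an index $N$ and a constant $\rho<1$, both independent of $T$, with $\lVert L_{T,k}^{-1}\rVert\le\rho$ for every $k\ge N$; for $k<N$ the same bound gives a uniform constant $C_{0}$. Multiplying factor by factor,
\[
\Bigl\lVert\prod_{k=1}^{j}L_{T,k}^{-1}\Bigr\rVert\le C_{0}^{\,N-1}\rho^{\,j-N+1}\qquad(j\ge N),
\]
which is the claimed uniform geometric decay; taking $j=T+1$ and letting $T\to\infty$ proves \eqref{40}.

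For \eqref{41} I would start from \eqref{36} with $i=T$, whose $j$-th summand is $\prod_{k=1}^{j}L_{T,k}^{-1}\bigl(\Psi_{2,j}+K_{T,j}\Psi_{1,j}\bigr)E_{0}\eta_{j}$. The matrices $\Psi_{1,j},\Psi_{2,j}$ are bounded because $\Phi_{t}^{-1}$ converges as $t\to\infty$, one has $\lVert K_{T,j}\rVert\le\kappa$, and the forcing terms $E_{0}\eta_{j}$ are bounded by the standing moment assumption; together with the geometric bound above this produces a majorant $C\rho^{\,j}$ that is summable and independent of $T$. By Proposition~\ref{prop4}, $K_{T,k}\to K_{\infty,k}$ for each fixed $k$, hence $L_{T,k}\to L_{\infty,k}:=B_{k}+K_{\infty,k}Q_{12,k}$ and every finite product $\prod_{k=1}^{j}L_{T,k}^{-1}\to\prod_{k=1}^{j}L_{\infty,k}^{-1}$, so each summand has a limit as $T\to\infty$. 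With a uniform summable majorant in hand, I would apply dominated convergence for series (Tannery's theorem) to interchange $\lim_{T\to\infty}$ with the summation, obtaining
\[
g_{\infty}=-\sum_{j=1}^{\infty}\Bigl(\prod_{k=1}^{j}L_{\infty,k}^{-1}\Bigr)\bigl(\Psi_{2,j}+K_{\infty,j}\Psi_{1,j}\bigr)E_{0}\eta_{j},
\]
a convergent series in $\mathbb{R}^{n_{y}}$, which is \eqref{41}.

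The main obstacle is the uniform-in-$T$ decay of the first step. Because each factor $L_{T,k}$ depends on the horizon through $K_{T,k}$, the contractivity of the tail must be controlled by constants $N$ and $\rho$ that do not degrade as $T\to\infty$; this is exactly where the uniform bound $\lVert K_{T,j}\rVert\le\kappa$ from Proposition~\ref{prop2}, together with $\lVert Q_{12,k}\rVert\to0$ and $B_{k}\to B$ (with $\lVert B^{-1}\rVert<1$), is indispensable, since without a $T$-free bound on $\kappa$ the head constant $C_{0}$ could blow up with $T$. The remaining technical point, the legitimacy of exchanging $\lim_{T}$ with the infinite sum in \eqref{41}, is then settled by the same majorant $C\rho^{\,j}$.
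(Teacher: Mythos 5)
Your proof is correct, and it reaches the two claims by a genuinely cleaner route than the paper. The paper's own proof first invokes Proposition~\ref{prop4} to replace each factor by its limit, asserting $\lim_{T\to\infty}\prod_{k=1}^{T+1}L_{T,k}^{-1}=\lim_{T\to\infty}\prod_{k=1}^{T+1}L_{\infty,k}^{-1}$, and then derives the tail contraction $\lVert L_{\infty,k}^{-1}\rVert\le\rho<1$ for the \emph{limit} matrices; both \eqref{40} and \eqref{41} are then read off from the exponential decay, with the convergence of $g_{T,T}$ stated rather tersely. The delicate point in that route is precisely the displayed interchange: the number of factors grows with $T$ while each factor only converges pointwise, so the step needs a uniform-in-$T$ control that the paper does not spell out. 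Your argument supplies exactly that control: you bound $\lVert L_{T,k}^{-1}\rVert$ itself, uniformly in the horizon, via the Neumann inequality together with the uniform bound $\lVert K_{T,j}\rVert\le s_{1}^{*}$ from \eqref{76}, so \eqref{40} follows with no interchange of limits at all; Proposition~\ref{prop4} is then needed only for the pointwise (in $j$) convergence of the summands of $g_{T,T}$ in \eqref{36}, and Tannery's theorem with the majorant $C\rho^{j}$ legitimizes passing $\lim_{T}$ inside the sum, yielding an explicit series formula for $g_{\infty}$. In short, you follow the same overall strategy (tail contraction plus geometric domination of the series) but organize the estimates so that every limit exchange is justified; what the paper's version buys is brevity, while yours is a rigorous completion of the argument and, as a bonus, identifies $g_{\infty}$ as $-\sum_{j=1}^{\infty}\bigl(\prod_{k=1}^{j}L_{\infty,k}^{-1}\bigr)(\Psi_{2,j}+K_{\infty,j}\Psi_{1,j})E_{0}\eta_{j}$. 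One cosmetic remark: the proposition is stated under \eqref{38_}, but—like the paper's own proof, which leans on Proposition~\ref{prop4} and \eqref{76}—your argument really uses the stronger hypothesis \eqref{38}; this inherited mismatch is in the paper, not a defect of your reasoning.
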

 \begin{proof}
 From \eqref{33} and \ref{prop4} it follows that   
 \[\lim_{T \to \infty} L_{T,k}=B_{k}+ K_{\infty,k} Q_{12,k} = L_{\infty,k}. \]
Then the limit in \eqref{40} can be represented as
 \begin{equation}  \label{42}
\lim_{T \to \infty} \prod _{k=1}^{T+1}L_{T,k}^{-1} =\lim_{T \to \infty} \prod _{k=1}^{T+1} L_{\infty,k} ^{-1}.  
\end{equation}
Since $K_{\infty,k}$ is bounded (it follows from formula \eqref{76} in \ref{A}) and
\[\lim_{k \to \infty} Q_{12,k}=0,\quad   \text{and}\quad   \lim_{k \to \infty}B_{k}^{-1}=B^{-1},\]  
 we have $lim_{k \to \infty}L_{\infty,k}^{-1} =B^{-1}$.
Therefore, if $\delta >0$ is arbitrary small, there is an $N=N_{\delta}\in {\mathbb{N}}$ such that  
 \begin{equation}  \label{42_}
 \lVert L_{\infty,k}^{-1}\rVert \le  \beta + \delta =\rho < 1, 
\end{equation}
for $k >  N$, where $\beta$ is the largest eigenvalue (in modulus) of the matrix $B^{-1}$. 
From this, the norm property and \eqref{42} we obtain
\[\lim_{T \to \infty} \left\lVert \prod _{k=1}^{T+1}L_{T,k}^{-1}\right\rVert \le \lim_{T \to \infty}  \prod _{k=1}^{T+1}\left\lVert L_{\infty,k}^{-1}\right\rVert \le\lim_{T \to \infty}  C_1 \rho^{T-K}=0,  \]
where $C_1$ is some constant. 

By \eqref{42_} the products in \eqref{36}    decay exponentially with the factor $\rho$ as $j \to \infty$.  From this and the  boundedness of the terms $ K_{T,k}$, $\Psi _{2,k}$, $\Psi _{1,k}$ and $E_{0}\eta _{k}$,  $T \in \mathbb{N}$ and $k=1,2,\ldots,T+1$, it follows 
 that the series
\begin{equation*}  
g_{T,T} =-\sum _{j=1}^{T+1}\prod _{k=1}^{j}L_{T,k}^{-1}
(\Psi _{2,j}
+K_{T,j} \Psi _{1,j} )E_{0}
\eta _{j}.
\end{equation*}
converges to some $g_{\infty}$ as $T \to \infty$. 
  \end{proof}
%

From  Proposition~\ref{prop4} and Proposition~\ref{prop5} it may be concluded that as $T$ tends to infinity  Equation \eqref{39} takes the form:
\begin{equation}  \label{44}
u_{0} =-K_{\infty ,0} s_{0} +g_{\infty }.
\end{equation}
Formula \eqref{44} provides a unique bounded solution to the transformed rational expectation model with time-varying parameters
\eqref{30}--\eqref{31}, and may be treated as a policy function for this type of problems. 
\begin{remark}\label{rem}
Particularly, for $n=1$ we have 
\begin{equation}  \label{g}
\begin{split}
g_{T,T}^{(1)} =-\sum _{j=1}^{T+1}\prod _{k=1}^{j}L_{T,k}^{-1}
(\Psi _{2,j}
+K_{T,j} \Psi _{1,j} )E_{0}z^{(1)} _{j}\\
= -\sum _{j=1}^{T+1}\prod _{k=1}^{j}L_{T,k}^{-1}
(\Psi _{2,j}
+K_{T,j} \Psi _{1,j} )\Lambda^{j+1} z^{(1)} _{0}.
\end{split}
\end{equation}
Taking into account \eqref{9_}, we get $g_{T,T}^{(1)} = 0$.

\end{remark}
\begin{remark}\label{rem0}
The details of derivations for the solution of time-varying rational expectations model corresponding to  the  first order approximation of the system~\eqref{18} and \eqref{19_} are carried out in Appendix~\ref{FOS2}, where we also derive  the moving-average representation for $x^{(1)}_{t }$ and  $y^{(1)}_{t }$. Having this representation  it is not hard to compute all quadratic terms in \eqref{quadr}.
\end{remark}

\begin{remark}\label{rem1}
If $c=0$ or $d=0$ (or both) in the inequality \eqref{38}, i.e. one of the variables $s_t$ or $u_t$ (or both) is exogenous to the other, then  \eqref{38} is always valid under the conditions \eqref{29}.
\end{remark}
\begin{remark}\label{rem2}
The inequality \eqref{38} is a sufficient condition for the existence of the solution in the form \eqref{44}, and can be weakened. For the representation  \eqref{44} we need only the invertibility of matrices $L_{T,T-i}$ defined in \eqref{33}.
\end{remark}

 \subsection{Initial conditions.}

It remains to find the initial condition for a stable solution to the system \eqref{30}--\eqref{31} corresponding the initial condition \eqref{19_}.
Recall that we deal with the $n$-order problem \eqref{18}--\eqref{19_}, and   we now put  the superscript $(n)$ back in notation. 
%
%
From \eqref{26_0} and
\eqref{44} we have
 \[\left[%
\begin{array}{c}
{s_{0}^{(n)} } \\ 
{-K_{\infty ,0}^{(n)} s_{0}^{(n)} +g_{\infty }^{(n)} }%
\end{array}%
\right]=Z^{-1} \left[%
\begin{array}{c}
{0} \\ 
{y_{0}^{(n)} }%
\end{array}%
\right],\]
where  $Z^{-1}$ is a matrix that is involved in the block-diagonal 
factorization \eqref{25} and has
the following block-decomposition: 
\[Z^{-1} =\left[%
\begin{array}{cc}
{Z^{11} } & {Z^{12} } \\ 
{Z^{21} } & {Z^{22} }
\end{array}%
\right].\]
 Hence 
\begin{align}
s_{0}^{(n)} =Z^{12} y_{0}^{(n)},\label{45}\\
-K_{\infty ,0}^{(n)} s_{0}^{(n)} +g_{\infty }^{(n)} =Z^{22} y_{0}^{(n)}.\label{46}
\end{align}

Substituting \eqref{45} into \eqref{46}
, we get 
\begin{equation}  \label{47}
y_{0}^{(n)} =(Z^{22} +K_{\infty ,0}^{(n)} Z^{12} )^{-1} g_{\infty }^{(n)} .
\end{equation}
The vector $(y_{0}^{(n)},0)$ is the initial condition corresponding to a bounded solution to \eqref{18} for $t>0$, hence formula \eqref{47} determines the solution to the original rational expectations model with time-varying parameters. In other words, $y_{0}^{(n)}$ is a policy function for the rational expectations model with time-varying parameters at the point  $x_{0}^{(n)} =0$. Particularly, 
 for $n=1$ from \eqref{47} and taking into account $g_{\infty}^{(1)} = 0$ we have $y_{0}^{(1)}=0$. The condition of the  invertibility of matrix  $Z^{22} +K_{\infty ,0}^{(n)} Z^{12}$  corresponds to Proposition 1 of  \cite{r25}. 

 \subsection{Expected dynamics. Restoring the original variables $x_{t}^{(n)}$ and $y_{t}^{(n)}$.}

To compute the expected dynamics (impulse response function) it is more convenient to work with auxiliary variables $u_t^{(n)}$ and $s_t^{(n)}$, then to restore the original variables $x_{t}^{(n)}$ and $y_{t}^{(n)}.$
Substituting \eqref{44} for $u_t$ in \eqref{30} and taking expectations at $t=0$ gives 
\begin{equation}\label{47:1}
E_{0}s_{t+1}^{(n)} = (A_t - Q_{12,t}K_{\infty ,t}^{(n)} )E_{0}s_{t}^{(n)} +Q_{12,t}g_{\infty}^{(n)}+\Psi_{1,t}E_{0}\eta_{t+1}^{(n)}.
\end{equation}
 From \eqref{45} we can compute the initial condition $s_{0}^{(n)}$ for \eqref{47:1}. Knowing the initial value $s_0^{(n)}$ allows us to obtain the whole trajectory of the solution to \eqref{47:1}, i.e. $E_0s_t, t \in \mathbb{N}$. The expected dynamics of $E_0u_t$ can easily be obtained from \eqref{44} 
\begin{equation}\label{47:2}
E_{0}u_{t}=-K_{\infty ,t}^{(n)} E_{0}s_{t}+g_{\infty}.
\end{equation}
Then the expected dynamics of the original variables is restored by 
\begin{align}
E_{0}x^{(n)}_{t } = Z_{11}E_{0}s^{(n)}_{t} + Z_{12}E_{0}u^{(n)}_{t}, \label{47:3}\\ 
E_{0}y^{(n)}_{t } = Z_{21}E_{0}s^{(n)}_{t} + Z_{22}E_{0}u^{(n)}_{t},\label{47:4}
\end{align}
 where $Z_{ij}$, $i=1,2$, $j=1,2,$ are blocks of the
block-decomposition of the matrix $Z$.
From \eqref{47:1} and \eqref{47:2} it follows that the process $(s_t, u_t)$ is stable, as $A_t \to A$, but $A$ is a stable matrix, $Q_{12,t} \to 0$, and $K_{\infty ,t}^{(n)}$ are bounded matrices for $t \geq 0$.From this and \eqref{47:3} and  \eqref{47:4} it may be concluded that the process $(y^{(n)}_{t } , x^{(n)}_{t } )$ is also stable.
To sum up, under the assumption that the solutions of lower order than $n$ are already computed in the same manner as for the $n$th order, we find the stable solution to the original model \eqref{1} in the form
\begin{align*}
E_{0}y_{t} =\sum _{i=0}^{n}\sigma ^{i} E_{0}y^{(i)}_{t}, \\
E_{0}x_{t} =\sum _{i=0}^{n}\sigma ^{i} E_{0}x^{(i)}_{t}. 
\end{align*}

 \section{Approximate solution: an Asset Pricing Model}\label{example}

To illustrate how the presented method works we apply it to the nonlinear asset
pricing model considered above. The simplicity of the model allows us to derive all approximations  in the analytical form.
%
We begin with the first order approximation determined by Equations \eqref{62} and%
\eqref{62:2} under the assumption that the deterministic solution $y_{t}^{(0)}$ and $x_{t}^{(0)} $ are
known for $t \geqslant 0$ and $y_{t}^{(0)}$  satisfies \eqref{64}. Rewriting \eqref{62} for $t=T$ and taking into account that $E_{T} 
x_{T+1}^{(1)}= \rho x_{T}^{(1)}$ gives

\begin{equation} 
\begin{split}
y_{T}^{(1)}=\exp (\theta x_{T+1}^{(0)} )\beta \theta \left(1+y_{T+1}^{(0)} \right)\rho x_{T}^{(1)} +\exp (\theta x_{T+1}^{(0)} )\beta E_{T} y_{T+1}^{(1)}, \label{T}
\end{split}
\end{equation}
Similarly to \eqref{T} for $t=T-1$ we have
\begin{equation} 
\begin{split}
y_{T-1}^{(1)}=\exp (\theta x_{T}^{(0)} )\beta \theta \left(1+y_{T}^{(0)} \right)\rho x_{T-1}^{(1)} +\exp (\theta x_{T}^{(0)} )\beta E_{T-1} y_{T}^{(1)}, \label{T-1}
\end{split}
\end{equation}
Substituting in the last equation \eqref{T} for $y_{T}^{(1)}$ and taking into account that $E_{T-1} 
x_{T}^{(1)}= \rho x_{T-1}^{(1)}$ , we obtain
\begin{equation*} 
\begin{split}
y_{T-1}^{(1)}=\big[\theta\beta \rho \exp (\theta x_{T}^{(0)} )\left(1+y_{T}^{(0)} \right)  +\theta(\beta\rho)^2 \exp (\theta (x_{T}^{(0)} +x_{T+1}^{(0)})) \left(1+y_{T+1}^{(0)} \right)\big]x_{T-1}^{(1)}\\
+\beta^2 \exp (\theta (x_{T}^{(0)} +x_{T+1}^{(0)}))E_{T-1} y_{T+1}^{(1)}). \label{T-1_}
\end{split}
\end{equation*}
Continuing further in the same way, for $t=T-k+1$ we have 
\begin{equation}   \label{T_i}
\begin{split}
y_t^{(1)}=\underbrace{ \theta \left[\sum _{i=1}^{k }(\beta \rho)^{i}\left( 1+  y_{t+i}^{(0)} \right)exp\left( \theta \sum _{j=1}^{i}x_{t+j}^{(0)} \right)\right]}_{= - K_{T,t} }x_{t}^{(1)}+ \\
\beta^{k} exp\left( \theta \sum _{j=1}^{i}x_{t+j}^{(0)} \right) E_{t} y_{T+1}^{(1)}.
\end{split}
\end{equation}
If  the  moment $T$ tends to $\infty$, then the following solution for $y_t^{(1)}$ is valid:
\begin{equation}   \label{65_}
y_t^{(1)}=\underbrace{ \theta \left[\sum _{i=1}^{\infty }(\beta \rho)^{i}\left( 1+  y_{t+i}^{(0)} \right)exp\left( \theta \sum _{j=1}^{i}x_{t+j}^{(0)} \right)\right]}_{= - K_{\infty,t} }
x_{t}^{(1)} = - K_{\infty,t} x_{t}^{(1)}.
\end{equation}
%
%
Note that $x_{0}^{(1)}=0$, hence  $y_{0}^{(1)}=0$.

%
%
%
%
%
We now turn to the second order approximation. 
Equation \eqref{63} is also a linear  forward-looking equation with time varying deterministic coefficients and can solved by the backward induction. Indeed, rewriting \eqref{63} for $t=T$ yields
\begin{equation} 
\begin{array}{l}
y_{T}^{(2)} =
\frac{1}{2}\beta \exp (\theta x_{T+1}^{(0)} )\theta ^{2} \left(1+y_{T+1}^{(0)}\right)E_{T}\left(x_{T+1}^{(1)} \right)^{2} \\
+\theta \beta \exp (\theta x_{T+1}^{(0)} )E_{t} \left(x_{T+1}^{(1)} y_{T+1}^{(1)}\right)+\beta \exp (\theta x_{T+1}^{(0)} )E_{T} \left(y_{T+1}^{(2)}\right).
\end{array}%
\end{equation}
Substituting \eqref{65_} for $y^{(1)}_{T+1}$ in \eqref{63} and collecting the terms with $E_{T}\left(x_{T+1}^{(1)} \right)^{2}$ yields
\begin{equation} \label{63__}
\begin{array}{l}
y_{T}^{(2)} =
\frac{1}{2}\beta \exp (\theta x_{T+1}^{(0)} )\theta \left [   \theta \left(1+y_{T+1}^{(0)}\right) - 2K_{\infty,T+1}\right]
E_{T}\left(x_{T+1}^{(1)} \right)^{2} \\
+\beta\exp (\theta x_{T+1}^{(0)} )E_{T} \left(y_{T+1}^{(2)}\right).
%
%
\end{array}%
\end{equation}
Substituting $T-1$ for $T$ in \eqref{63__} gives
\begin{equation} \label{632}
\begin{array}{l}
y_{T-1}^{(2)} =
\frac{1}{2}\beta \exp (\theta x_{T}^{(0)} )\theta \left [   \theta \left(1+y_{T}^{(0)}\right) - 2K_{\infty,T}\right]
E_{T}\left(x_{T}^{(1)} \right)^{2} \\
+\beta\exp (\theta x_{T}^{(0)} )E_{T-1} \left(y_{T}^{(2)}\right).
\end{array}%
\end{equation}
Inserting $y_{T}^{(2)}$ from \eqref{63__} into \eqref{632}, we have
\begin{equation} 
\begin{array}{l}
y_{T-1}^{(2)} =
\frac{1}{2}\beta \exp (\theta x_{T}^{(0)} )\theta \left [   \theta \left(1+y_{T}^{(0)}\right) - 2K_{\infty,T}\right]
E_{T}\left(x_{T}^{(1)} \right)^{2} \\
+\frac{1}{2}\beta^{2}\exp \left(\theta \left( x_{T}^{(0)}+x_{T+1}^{(0)}\right)\right )  \theta \left [   \theta \left(1+y_{T+1}^{(0)}\right) - 2K_{\infty,T+1}\right]
E_{T-1}\left(x_{T+1}^{(1)} \right)^{2} \\
+\beta^{2}\exp\left (\theta \left( x_{T}^{(0)}+x_{T+1}^{(0)}\right) \right)E_{T} \left(y_{T+1}^{(2)}\right).
\end{array}%
\end{equation}
 For $t=T-k+1$ we have 
\begin{equation}   \label{Ti2}
\begin{split}
&y_t^{(2)}=\frac{1}{2} \theta \sum _{i=1}^{k }\beta^{i} \exp \left(\theta
\sum _{j=1}^{i}x_{t+j}^{(0)} \right)\left[\theta
\big(1+y_{t+i}^{(0)}\big)
-2K_{\infty,t+i}\right] E_{t}\big(x_{t+i}^{(1)}\big)^{2} \\ +
&\beta^{k} exp\left( \theta \sum _{j=1}^{i}x_{t+j}^{(0)} \right) E_{t} y_{T+1}^{(2)}.
\end{split}
\end{equation}
If  the moment $T$ tends to $\infty$, then the following solution for $y_t^{(2)}$ is valid:
%
\begin{equation}  \label{66}
\begin{split}
y_{t}^{(2)} =\frac{1}{2} \theta \sum _{i=1}^{\infty }\beta ^{i} \exp \left(\theta
\sum _{j=1}^{i}x_{t+j}^{(0)} \right)\left[\theta
\big(1+y_{t+i}^{(0)}\big)
-2K_{\infty,t+i}\right] E_{t}\big(x_{t+i}^{(1)}\big)^{2} 
\end{split}
\end{equation}
At the time $t=0$ equation \eqref{66} provides the second term of the policy function series expansion 
\begin{equation}   
\begin{split}
&y_0^{(2)}=\frac{1}{2} \theta \sum _{i=1}^{\infty }\beta^{i} \exp \left(\theta
\sum _{j=1}^{i}x_{j}^{(0)} \right)\left[\theta
\big(1+y_{i}^{(0)}\big)
-2K_{\infty,i}\right] E_{0}\big(x_{i}^{(1)}\big)^{2} 
\end{split}
\end{equation}
The expectation term in the last equation can be obtained by using the moving-average representation for $x_{i}^{(1)}$. Indeed, from \eqref{62:2} and \eqref{59} we have 
\begin{equation*}  
x_{i}^{(1)} =\varepsilon _{i} +\rho \varepsilon _{i-1} +...+\rho
^{i-1} \varepsilon _{1}.
\end{equation*}
Since the sequence of innovations $\varepsilon_{i}$, $i>0$, 
is independent it follows that 
\begin{equation}  \label{68}
\begin{split}
E_{0}\left(x_{i}^{(1)} \right)^{2}&=E_{t}\left(\varepsilon _{i} +\rho \varepsilon _{i-1} +...+\rho
^{i-1} \varepsilon _{1}
\right)^{2}\\
&=1+\rho ^{2} +\cdots+\rho ^{2(i-1)} =\frac{1-\rho^{2i} }{1-\rho
^{2} }.
\end{split}
\end{equation}
The sum in exponential in \eqref{66} can be obtained from \eqref{57} 
\begin{equation}  \label{69}
\begin{array}{l}
x_{1}^{(0)} +x_{2}^{(0)} +\cdots+x_{i}^{(0)}=\bar{x}+\rho (x_{0}^{(0)} -\bar{x} )+\bar{x}+\rho^{2} (x_{0}^{(0)} -\bar{x} )\\
+\bar{x}+\rho ^{i} (x_{0}^{(0)} -\bar{x})= i\bar{x}+\frac{\rho (1-\rho ^{i} )}{1-\rho } (x_{0}^{(0)} -\bar{x}).
\end{array}%
\end{equation}
Finally, inserting 
\eqref{68} and \eqref{69} into \eqref{66} gives 
\begin{equation*}
y_{0}^{(2)} =\frac{\theta }{2} \sum _{i=1}^{\infty }\beta ^{i} \frac{1-\rho ^{2i}}{1-\rho^{2} }\exp \left\{\theta\Big[i\bar{x}+b (x_{0}^{(0)} -\bar{x})\Big]\right\}\left[\theta(1+y_{i}^{(0)})-2K_{\infty,i}\right].
\end{equation*}
In computation we need to use a finite terminal time $T+1$. Despite the fact that the method converges for any terminal condition $y_{T+1}^{(2)}$, the most reasonable choice of the terminal condition is the second order term in the  expansion of the stochastic steady state in a series of powers of $\sigma$. 
To summarize,  we find the policy function approximation in the form 
\begin{equation*}
y(x_0)=h(x_0)=y_{0}^{(0)} +\sigma ^{2} y_{0}^{(2)}.
\end{equation*}
Note that both $y_{0}^{(0)}$ and $y_{0}^{(2)}$ are functions of $x_0$.
From \eqref{66} and using \eqref{68}, we can get the expected dynamics (in other words, impulse response function) $E_{0}y_{t}$. 
The solutions for the higher orders $y_{t}^{(n)} (x)$, $n>2$, can be obtained in much the same way as for $y_{t}^{(2)}(x)$.

 \subsection{Comparison with the local perturbation}

This subsection compares the policy functions of  the second order of the
presented method with the local Taylor series expansions of orders two and six\citep{r14}. 

The parameterization follows \cite{r3}, where the benchmark parameterization is chosen as in \cite{r24}. 
We therefore set the mean of the rate of
growth of dividend to $\bar{x}=0.0179$, the volatility of the innovations to $\sigma $ = 0.015, the parameter $\theta $ to $-1.5$ 
and $\beta $ to $0.95$. 
For illustrative purpose,  we choose the highly persistent exogenous process with $\rho = 0.9$ as in \cite{r3}.
\begin{figure} 
\includegraphics [scale=0.65] {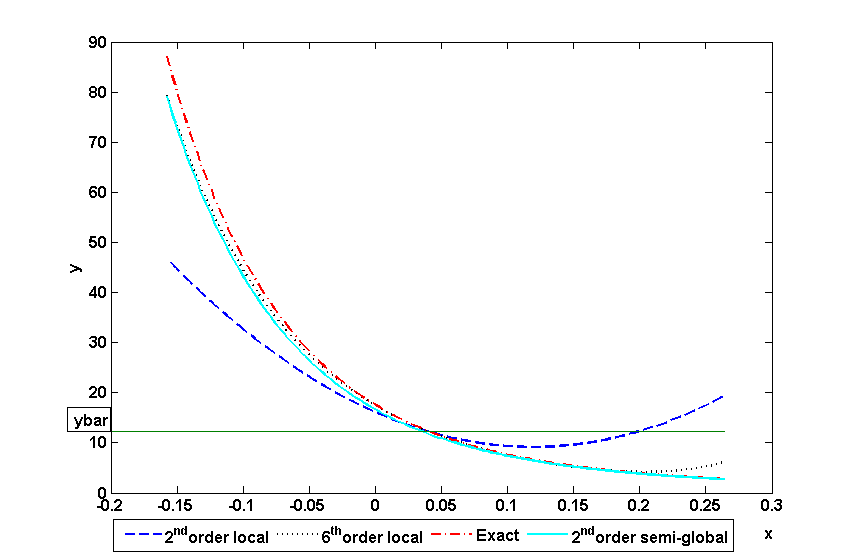}
\caption[]{ Comparison of the approximations of the policy function.  The level of the deterministic steady state of $y_t$ is denoted by ybar} 
\label{fig1}
\end{figure}
Fig~\ref{fig1}  illustrates  the exact policy function with the approximate ones constructed by  the semi-global method and the local Taylor series expansions. This figure is drawn over the
interval $x_{i} \in [\bar{x}- \Delta\cdot \sigma _{x} ,\bar{x}+\Delta \cdot
\sigma_{x} ]$, where $\sigma_{x}$ is the unconditional volatility of the
process $x_{t}$ and $\Delta=5$.
Fig~\ref{fig1}  shows that the semi-global approximation has the same accuracy as the sixth order of the Taylor series local expansion at the left endpoint of the interval under consideration. However, at the right endpoint  of the interval the semi-global solution is much more accurate than the sixth order of the Taylor series expansion. Actually, the semi-global approximation  is indistinguishable from the true solution in this domain. The second order of the Taylor series expansion is much less accurate globally than both  the sixth order of the Taylor series expansion and semi-global solution. 

From Fig~\ref{fig1} one can also see another undesirable property of the the local Taylor series expansion, namely this method may provide impulse response functions with  wrong signs. Indeed, the steady state value of $y_t$ is $\bar{y}=12.3$. After a big positive shock the true  impulse response function is negative (the policy function values are below the steady state), whereas the impulse response function implied by the second order of the local perturbation method is positive (the approximate policy function is above the steady state). The sixth order approximation of the local perturbation method has the right sign of impulse response, but wrong shape, which is U-shaped instead of being monotonically increasing. In contrast, the semi-global method, as just mentioned, provides almost exact impulse response. 

 \section{Conclusion}\label{conclusion}
This study proposes an approach based on a perturbation around a deterministic path for constructing  global approximate solutions to DSGE models. 
Under the assumption that the deterministic solution to the model is already found, the approach reduces the problem to solving recursively a set of  linear rational expectations models with deterministic  time-varying parameters and the same homogeneous part. 
The paper also proposes a method to solve linear rational expectations models with deterministic  time-varying parameters. 
The conditions under which the solutions exist are found; all results are obtained   for   DSGE models in general form and proved  rigorously.



The paper illustrates the algorithm for the second order of approximation using an nonlinear asset pricing model by  \cite{r16} and compares it with the local Taylor series expansion.
The second order approximation of the semi-global method provide more accurate solution than sixth order of the  Taylor series expansion around the deterministic steady state.

The approach is applicable to Markov-switching DSGE models in the form proposed by \cite{r37}, where the vector of Markov-switching parameters that would influence the steady state is scaled by a small factor. Actually, under the conditions of "smallness" of a scaling parameter and existence of higher order moments for stochastic terms, all derivations of Sections~\ref{expan}, \ref{transform} and \ref{sec6}  hold irrespective of probability distribution functions for these stochastic terms.  

%

\appendices
\section{Proofs for Section $5$}\label{A} 

 PROOF OF PROPOSITION~\ref{prop1}: The proof is by induction on $i$.
Suppose that $i=0$. For the time $T$ from \eqref{31}
we have 
\begin{equation*}
E_{T} u_{T+1} =B_{T} u_{T} +Q_{21,T} s_{T} +\Psi _{2,T} E_{T}
\eta _{T+1} .
\end{equation*}
As $B_{T}$ is invertible, we have 
\begin{equation*}
u_{T,T} =-K_{T,T} s_{T} - g_{T,0} +L_{T,T}^{-1}E_{T} u_{T+1},
\end{equation*}
where $K_{T,T}=B_{T}^{-1} Q_{21,T}$;  $g_{T,0} = -B_{T}^{-1} \Psi _{2,T}E_{T} \eta _{T+1}$ and $L_{T,T}^{-1} = B_{T}^{-1}$. 
From \eqref{32}, \eqref{33} and \eqref{36} it
follows that the inductive assumption is proved for $i=0$. Assuming that \eqref{35} holds for $i>0$, we will
prove it for $i+1$. To this end, consider Equation \eqref{31} for the time $t=T- i -1$. As the
matrix $B_{T-i}$ is invertible, we obtain 
\begin{equation*}
\begin{split}
u_{T,T-i-1} =-B_{T-i-1}^{-1} Q_{21,T-i-1} s_{T-i-1} - B_{T-i-1}^{-1} \Psi _{2,T-i-1}
E_{T-i-1} \eta _{T-i} \\
+ B_{T-i-1}^{-1} E_{T-i-1} u_{T,T-i}.
\end{split}
\end{equation*}
Substituting the induction assumption \eqref{35} for $u_{T,T-i}$ yields 
\begin{equation*}
\begin{array}{l}
 u_{T,T-i-1} = -B_{T-i-1}^{-1} Q_{21,T-i-1} s_{T-i-1} - B_{T-i-1}^{-1} \Psi _{2,T-i-1}
E_{T-i-1} \eta _{T-i} \\
+ B_{T-i-1}^{-1} E_{T-i-1}\left[-K_{T,T-i} s_{T-i}+g_{T,i} + \left(\prod _{k=1}^{i+1}L_{T,T-i+k}^{-1}\right) E_{T-i} \left(u_{T+1}\right)\right].
\end{array}
\end{equation*}
Substituting \eqref{30} for $E_{T-i-1}(s_{T-i})$ and using the law of iterated expectations gives 
\begin{equation*}
\begin{array}{l}
u_{T,T-i-1} =-B_{T-i}^{-1} Q_{21,T-i} s_{T-i-1} -B_{T-i}^{-1} \Psi_{2,T-i} E_{T-i-1} \eta _{T-i} +B_{T-i}^{-1} g_{T,i} \\
+B_{T-i}^{-1}\left(\prod _{k=1}^{i+1}L_{T,T-i+k}^{-1}\right) E_{T-i-1} \left(u_{T+1} \right)\\ 
+B_{T-i}^{-1}\left[-K_{T,T-i} \left(A_{T-i} s_{T-i-1} +Q_{12,T-i}
u_{T,T-i-1} +\Psi _{1,T-i} E_{T-i-1} \eta _{T-i}\right)\right] .
\end{array}%
\end{equation*}
Collecting the terms with $ u_{T,T-i-1}$, $s_{T-i-1}$ and $\eta_{T-i }$, we get 
\begin{equation*}  
\begin{array}{l}
\left(I+B_{T-i}^{-1} K_{T,T-i} Q_{12,T-i} \right)u_{T,T-i-1} = -B_{T-i}^{-1}\big[(Q_{21,T-i} +K_{T,T-i} A_{T-i})s_{T-i-1}  \\ 
+(\Psi _{2,T-i} +K_{T,T-i} \Psi _{1,T-i} )E_{T-i-1} \eta_{T-i}+ g_{T,i} + \left(\prod _{k=1}^{i+1}L_{T,T-i+k}^{-1}\right)
E_{T-i-1} \left(u_{T+1} \right)\big]\label{70}
\end{array}
\end{equation*}
Suppose for  the moment that the matrix
\[Z_{T,T-i} =I+B_{T-i}^{-1} K_{T,T-i} Q_{12,T-i}\]
is invertible. Pre-multiplying the last equation 
by $Z_{T,T-i}^{-1} $, we obtain 
\begin{equation*}
\begin{array}{l}
u_{T,T-i-1} =-Z_{T,T-i}^{-1}B_{T-i}^{-1}\big[(Q_{21,T-i} +K_{T,T-i} A_{T-i} )s_{T-i-1} \\ 
+(\Psi _{2,T-i} +K_{T,T-i} \Psi _{1,T-i} )E_{T-i-1} \eta_{T-i}+ g_{T,i} \\
+\left(\prod _{k=1}^{i+1}L_{T,T-i+k}^{-1}\right) E_{T-i-1} \left(u_{T+1} \right)\big].
\end{array}
\end{equation*}
Note that $L_{T,T-i} =B_{T-i} Z_{T,T-i}$; then using the definition of $K_{T,T-i-1}$ \eqref{32}, we see that 
\begin{equation} 
\begin{array}{l} 
u_{T,T-i-1} =-K_{T,T-i-1} s_{T-i-1}  \\
-L_{T,T-i}^{-1}\left(\Psi _{2,T-i} +K_{T,T-i} \Psi _{1,T-i}\right)E_{T-i-1} \eta_{T-i}\\
+ L_{T,T-i}^{-1}g_{T,i} + L_{T,T-i}^{-1}\left(\prod _{k=1}^{i+1}L_{T,T-i+k}^{-1}\right)E_{T-i-1}\left(u_{T+1} \right).\label{71}
\end{array}%
\end{equation}
Using the definition of $g_{T,i}$ and $L_{T-i,T-i+j} $ (\eqref{33} and 
\eqref{36}), we deduce that 
\begin{equation}  \label{72}
g_{T,i+1} =- L_{T,T-i}^{-1}\left(\Psi _{2,T-i} +K_{T,T-i} \Psi _{1,T-i}\right)E_{T-i-1} \eta_{T-i}\\
+ L_{T,T-i}^{-1}g_{T,i}.
\end{equation}
From \eqref{71} and \eqref{72} it follows that 
\begin{equation*}
u_{T,T-i-1} =-K_{T,T-i-1} s_{T-i-1} +g_{T,i+1} + \left(\prod _{k=1}^{i+2}L_{T,T-i-1+k}^{-1}\right) E_{T-i-1}
\left(u_{T+1} \right).
\end{equation*}
\qed

 PROOF OF PROPOSITION~\ref{prop2}: We begin by rewriting \eqref{32} as  
\begin{equation*}
\left(B_{T-i} +K_{T,T-i} Q_{12,T-i} \right)K_{T,T-(i+1)} =\left(Q_{21,T-i} +K_{T,T-i}A_{T-i}\right ).
\end{equation*}
Rearranging terms, we have 
\begin{equation}\label{72_}
\begin{split}
K_{T,T-(i+1)} =B_{T-i}^{-1} \cdot \left(Q_{21,T-i} +K_{T,T-i}A_{T-i}\right )\\
 - B_{T-i}^{-1}K_{T,T-i} Q_{12,T-i} K_{T,T-(i+1)} .
\end{split}
\end{equation}
Taking the norms and using the norm properties gives 
\begin{equation*}
\begin{split}
&\norm{ K_{T,T-(i+1)}}
   \leqslant \norm{ B_{T-i}^{-1}} \cdot \norm{
Q_{21,T-i} }+\norm{ B_{T-i}^{-1}}\cdot \norm{ K_{T,T-i}}
 \cdot \norm {A_{T-i} } \\
&+\norm{ B_{T-i}^{-1}} \cdot
\norm {K_{T,T-i} } \cdot \norm{ Q_{12,T-i}}\cdot \norm{
K_{T,T-(i+1)} }.
\end{split}
\end{equation*}
Rearranging terms, we get 
\begin{equation}
\lVert K_{T,T-(i+1)} \rVert \leqslant \frac{\lVert B_{T-i}^{-1} \rVert \cdot
\lVert Q_{21,T-i} \rVert +\lVert B_{T-i}^{-1} \rVert \cdot \lVert
K_{T,T-i} \rVert \cdot \lVert A_{T-i} \rVert }{1-\lVert B_{T-i}^{-1}
\rVert \cdot \lVert K_{T,T-i} \rVert \cdot \lVert Q_{12,T-i}
\rVert }.\label{72_1}
\end{equation}

Inequality \eqref{72_1} is a difference inequality with respect to $\lVert K_{T,T-i} \rVert$, $i=0,1,\ldots,T,$ and 
with the time-varying coefficients 
$\lVert A_{T-i} \rVert$, $\lVert B_{T-i}^{-1} \rVert$, $\lVert Q_{12,T-i}
\rVert$ and $\lVert Q_{21,T-i} \rVert $. In \eqref{72_1} we assume that 

$1-\lVert B_{T-i}^{-1} \rVert \cdot \lVert
K_{T,T-i} \rVert \cdot \lVert Q_{12,T-i} \rVert \ne 0$.

This is obviously true if $\lVert K_{T,T-i} \rVert =0$. We shall show that if the initial condition $\lVert K_{T,T+1} \rVert =0$, then 
$\left(1-\lVert B_{T-i}^{-1} \rVert \cdot \lVert
K_{T,T-i} \rVert \cdot \lVert Q_{12,T-i} \rVert\right) > 0$, $i=1, 2,\ldots,T.$  Indeed, consider the difference equation: 
\begin{equation}  \label{73}
s_{i+1} =\frac{bd+bas_{i} }{\left(1-bcs_{i} \right)}.
\end{equation}
\begin{lemma}\label{lem1}
 If  inequality \eqref{38}
holds, then the difference equation \eqref{73} has two fixed points 
\begin{equation}  \label{75}
s_{1}^{*} =\frac{2bd}{1-ba+\sqrt{(1-ba)^{2} -4b^{2} cd} },
\end{equation}
\begin{equation*}
s_{2}^{*} =\frac{1-ba+\sqrt{(1-ba)^{2} -4b^{2} cd} }{2bc}, 
\end{equation*}
 where $s_{1}^{*} $ is a stable fixed point whereas $s_{2}^{*} $ is
an unstable one. Moreover, under the initial condition $s_{0} =0$ the solution $s_{i}, i = 1,2,\dots,$ is an  increasing sequence
and converges to $s_{1}^{*} $.
\end{lemma}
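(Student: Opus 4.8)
The plan is to read the recursion \eqref{73} as the iteration $s_{i+1}=\phi(s_i)$ of the M\"obius map $\phi(s)=\dfrac{bd+bas}{1-bcs}$ and to analyze it by elementary one–dimensional dynamics, using throughout that $a,b,c,d\ge 0$ (they are suprema of norms) and that $ab<1$ by \eqref{29}. First I would locate the fixed points: $s=\phi(s)$ is equivalent, after clearing the denominator, to the quadratic $bc\,s^{2}-(1-ba)s+bd=0$. Its discriminant is $(1-ba)^{2}-4b^{2}cd$, which is \emph{strictly positive} precisely because inequality \eqref{38} reads $4b^{2}cd<(1-ba)^{2}$; hence there are two distinct real roots. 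Taking the smaller root and rationalizing the numerator turns $\dfrac{(1-ba)-\sqrt{(1-ba)^{2}-4b^{2}cd}}{2bc}$ into the form \eqref{75} for $s_{1}^{*}$, while the larger root is directly $s_{2}^{*}$; so the two stated formulas are just the two branches of the quadratic.

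Next I would settle stability by differentiating. A short computation shows the numerator simplifies to a constant, $\phi'(s)=\dfrac{b(a+bcd)}{(1-bcs)^{2}}$. Writing $bc\,s^{*}=\tfrac12\big((1-ba)\mp\sqrt{D}\big)$ with $D=(1-ba)^{2}-4b^{2}cd$, one gets $1-bc\,s_{1}^{*}=\tfrac12(1+ba+\sqrt{D})$ and $1-bc\,s_{2}^{*}=\tfrac12(1+ba-\sqrt{D})$, both positive since $\sqrt{D}<1-ba\le 1+ba$. The key algebraic identity $(1-bc\,s_{1}^{*})(1-bc\,s_{2}^{*})=\tfrac14\big((1+ba)^{2}-D\big)=b(a+bcd)$ then yields two things at once: $\phi'(s_{1}^{*})=\dfrac{1-bc\,s_{2}^{*}}{1-bc\,s_{1}^{*}}\in(0,1)$, because $0<1-bc\,s_{2}^{*}<1-bc\,s_{1}^{*}$; and $\phi'(s_{1}^{*})\,\phi'(s_{2}^{*})=1$, the M\"obius reciprocity of the two multipliers, so that $\phi'(s_{2}^{*})>1$. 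This classifies $s_{1}^{*}$ as stable and $s_{2}^{*}$ as unstable.

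For the convergence claim I would run a monotone–bounded argument on $[0,s_{1}^{*}]$. Since $\phi'(s)>0$ wherever $1-bcs>0$, and the pole $s=1/(bc)$ lies beyond $s_{2}^{*}$ (because $bc\,s_{2}^{*}\le 1-ba<1$), the map $\phi$ is continuous and strictly increasing on $[0,s_{1}^{*}]$, with $\phi(0)=bd\ge 0$ and $\phi(s_{1}^{*})=s_{1}^{*}$, hence $\phi$ maps this interval into itself. Moreover $\phi(s)-s=\dfrac{bc\,s^{2}-(1-ba)s+bd}{1-bcs}$, whose numerator is the fixed–point quadratic, positive for $s<s_{1}^{*}$; so $\phi(s)>s$ on $[0,s_{1}^{*})$. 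Starting from $s_{0}=0$, induction gives $s_{i}<s_{i+1}<s_{1}^{*}$, so the sequence is increasing and bounded, and its limit is a fixed point of $\phi$ lying in $[0,s_{1}^{*}]$, which can only be $s_{1}^{*}$. The step I expect to be the most delicate is the positivity bookkeeping that underpins everything else, namely confirming $1-bc\,s_{j}^{*}>0$ for both fixed points and that the singularity of $\phi$ sits strictly to the right of $s_{2}^{*}$; without these sign facts neither the multiplier estimate nor the self–map/continuity argument on $[0,s_{1}^{*}]$ would go through.
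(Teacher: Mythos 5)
Your proof is correct: the fixed-point quadratic $bc\,s^{2}-(1-ba)s+bd=0$, the rationalization that turns the smaller root into the form \eqref{75}, the constant-numerator derivative $\phi'(s)=b(a+bcd)/(1-bcs)^{2}$, the identity $(1-bc\,s_{1}^{*})(1-bc\,s_{2}^{*})=\tfrac14\big((1+ba)^{2}-D\big)=b(a+bcd)$ giving $\phi'(s_{1}^{*})\in(0,1)$ and $\phi'(s_{1}^{*})\,\phi'(s_{2}^{*})=1$, and the monotone-bounded argument on $[0,s_{1}^{*}]$ all check out (under the implicit standing assumption $c,d>0$, without which $s_{2}^{*}$ is not even well defined and the sequence degenerates to the constant $0$). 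The paper gives no proof of this lemma at all---it only asserts that it ``can be proved by direct calculation''---so your argument is precisely that calculation carried out in full, in the spirit the paper intends.
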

The lemma can be proved by direct calculation.
From \eqref{27}--\eqref{27_0} the values $a$, $b$, $c$ and $d$ majorize $\lVert A_{T-i} \rVert$ , $\lVert B_{T-i}^{-1} \rVert$, $\lVert Q_{12,T-i}\rVert$ and $\lVert Q_{21,T-i} \rVert$, respectively.
If we consider Equation \eqref{72} and inequality \eqref{73} as initial value problems with the initial conditions $\lVert K_{T,T+1} \rVert =0$ and $s_{0} =0$, then their solutions obviously satisfy the inequality $\lVert K_{T,T-i} \rVert \leqslant s_{i+1} $, $i=1,2, \dots , T$.
%
In other words, $\lVert K_{T,T-i} \rVert ${\ } is majorized by $s_{i}$.
From the last inequality 
 and Lemma~\ref{lem1} it may be concluded that
\begin{equation} \label{76}
 \lVert K_{T,T-i} \rVert \leqslant s_{1}^{*},\quad i =0,1,2,\ldots, T, \quad T\in \mathbb{N}.
\end{equation}

From \eqref{75}, \eqref{76} and %
\eqref{27} it follows that  
\begin{equation}\label{76_}
\lVert B_{T-i}^{-1} \rVert \cdot \lVert K_{T,T-i} \rVert \cdot
\lVert Q_{12,T-i} \rVert \leqslant \frac{2b^{2} dc}{1-ba+\sqrt{(1-ba)^{2}
-4b^{2} cd} } . 
\end{equation}
From \eqref{38} we see that $2b^{2} dc<(1-ab)^{2}/2$. Substituting this inequality into \eqref{76_} gives
\begin{equation}\label{76_1}
\begin{split}
\lVert B_{T-i}^{-1} \rVert \cdot \lVert K_{T,T-i} \rVert \cdot
\lVert Q_{12,T-i} \rVert &\leqslant \frac{\left(1-ba\right)^{2} }{2(1-ba+\sqrt{%
(1-ba)^{2} -4b^{2} cd)} }\\
&<\frac{\left(1-ba\right)^{2} }{2(1-ba)} =\frac{1-ba}{2} <1,
\end{split}
\end{equation}
where the last inequality follows from \eqref{29}. 
\qed

 PROOF OF PROPOSITION~\ref{prop4}: The assertion of the proposition is true
if there exist constants $M$ and $ r$ such that $0<r<1$ and for $T\in \mathbb{N}$ 
\begin{equation}  \label{77}
\lVert K_{T,j} -K_{T+1,j} \rVert \leqslant Mr^{T+1},\quad j=0,1,2,\ldots.
\end{equation}
Note now that $K_{T,j} $ ($K_{T+1,j} $) is a solution to the matrix
difference equation \eqref{32} at $i=T-j$ ($i=T+1-j$) with
the initial condition $K_{T,T+1} =0$ ($K_{T+1,T+2} =0$). 
Subtracting \eqref{72_} for $K_{T,T-(i+1)}$ from that for $K_{T+1,T-(i+1)}$, we have
\begin{equation*}
\begin{split}
&K_{T,T-(i+1)} -K_{T+1,T-(i+1)}=B_{T-i}^{-1}  (K_{T,T-i)} -K_{T+1,T-i})A_{T-i}\\
&-B_{T-i}^{-1}K_{T,T-i)} Q_{12,T-i} K_{T,T-(i+1)} + B_{T-i}^{-1}
K_{T+1,T-i} Q_{12,T-i} K_{T+1,T-(i+1)}.
\end{split}
\end{equation*}
Adding and subtracting $B_{T-i}^{-1}\cdot K_{T,T-i} \cdot Q_{12,T-i} \cdot K_{T+1,T-(i+1)}$ in the right hand side gives
\begin{equation*}
\begin{split}
&K_{T,T-(i+1)} -K_{T+1,T-(i+1)}=B_{T-i}^{-1}  (K_{T,T-i)} -K_{T+1,T-i})A_{T-i}\\
&-B_{T-i}^{-1}\cdot K_{T,T-i} \cdot Q_{12,T-i} (K_{T,T-(i+1)}-K_{T+1,T-(i+1)}) \\
&-B_{T-i}^{-1} (K_{T,T-i} - K_{T+1,T-i} )Q_{12,T-i}\cdot  K_{T+1,T-(i+1)}.
\end{split}
\end{equation*}
Rearranging terms yields
\begin{equation*}
\begin{split}
&(I+B_{T-i}^{-1}K_{T,T-i}  Q_{12,T-i} )(K_{T,T-(i+1)} -K_{T+1,T-(i+1)})\\
&=B_{T-i}^{-1} (K_{T,T-i} -K_{T+1,T-i})A_{T-i}\\
& -B_{T-i}^{-1}(K_{T,T-i}- K_{T+1,T-i} )Q_{12,T-i} K_{T+1,T-(i+1)}.
\end{split}
\end{equation*}
From Proposition~\ref{prop3} it follows that the matrix 
\[Z_{T,T-i}=(I+B_{T-i}^{-1}K_{T,T-i}  Q_{12,T-i} )\]
is invertible, then pre-multiplying the last equation by this matrix yields
\begin{equation*}
\begin{split}
&K_{T,T-(i+1)} -K_{T+1,T-(i+1)}=Z_{T,T-i}^{-1}(B_{T-i}^{-1} (K_{T,T-i} -K_{T+1,T-i})A_{T-i}\\
& -B_{T-i}^{-1}(K_{T,T-i)}- K_{T+1,T-i} )Q_{12,T-i} K_{T+1,T-(i+1)}).
\end{split}
\end{equation*}
Taking the norms, using the norm property and the triangle inequality, we get
\begin{equation}
\begin{split}
&\lVert K_{T,T-(i+1)} -K_{T+1,T-(i+1)}\rVert\\
&\leqslant \lVert Z_{T,T-i}^{-1}\rVert \cdot (\lVert B_{T-i}^{-1}\rVert \cdot \lVert K_{T,T-i} -K_{T+1,T-i}\rVert \cdot \lVert A_{T-i}\rVert\\
& + \lVert B_{T-i}^{-1}\rVert \cdot \lVert K_{T,T-i)}- K_{T+1,T-i} \rVert \cdot  \lVert Q_{12,T-i}\rVert \cdot  \lVert K_{T+1,T-(i+1)}\rVert ).
\end{split}
\end{equation}

From \eqref{27_0} and \eqref{76_1} we have
\begin{equation}\label{78}
\begin{split}
&\lVert K_{T,T-(i+1)} -K_{T+1,T-(i+1)}\rVert\\
&\leqslant \left(ab + \frac{1-ba}{2}\right)  \lVert Z_{T,T-i}^{-1}\rVert\cdot \lVert K_{T,T-i} -K_{T+1,T-i}\rVert \\
&=\frac{1+ba}{2}\lVert Z_{T,T-i}^{-1}\rVert \cdot   \lVert K_{T,T-i} -K_{T+1,T-i}\rVert .
\end{split}
\end{equation}
From  the norm property and \citet[Lemma 2.3.3]{r26}   we get the estimate 
\begin{equation*}
\begin{split}
\lVert Z_{T,T-i}^{-1}\rVert&=\lVert(I+B_{T-i}^{-1}K_{T,T-i}  Q_{12,T-i} )^{-1}\rVert \leqslant \frac{1}
{1-\lVert B_{T-i}^{-1}K_{T,T-i}  Q_{12,T-i}  \rVert} \\
&\leqslant \frac{1}{1-\lVert B_{T-i}^{-1}\rVert \cdot \lVert K_{T,T-i}\rVert \cdot \lVert Q_{12,T-i}  \rVert}
\end{split}
\end{equation*}
By \eqref{76_1}, we have
\begin{equation*}
\lVert Z_{T,T-i}^{-1}\rVert=
< \frac{1}
{1-\frac{1-ba}{2}} = \frac{2}
{1+ba}
\end{equation*}
Substituting the last inequality into \eqref{78} gives
\begin{equation}\label{79}
\begin{split}
&\lVert K_{T,T-(i+1)} -K_{T+1,T-(i+1)}\rVert<   \lVert K_{T,T-i} - K_{T+1,T-i}\rVert .
\end{split}
\end{equation}
Using \eqref{79} successively for $i=-1,0,1,\ldots, T-1$, and taking into account $K_{T,T+1} =0$ and $K_{T+1,T+1} =B^{-1}_{T+2}Q_{21,T+2}$ results in
\begin{equation}\label{79}
\begin{split}
&\lVert K_{T,j} -K_{T+1,j}\rVert< \lVert K_{T,T+1} -K_{T+1,T+1}\rVert=  \lVert B_{T+2}^{-1} Q_{21,T+2}\rVert \\
& \leqslant \lVert B_{T+2}^{-1}\rVert \cdot \lVert Q_{21,T+2}\rVert \leqslant b\lVert Q_{21,T+2}\rVert,\quad j=0,1,2,\ldots.
\end{split}
\end{equation}
Recall that $Q_{21,T}$ depends on the solution to the deterministic problem \eqref{14}, i.e. \[Q_{21,T}=Q_{21}\left(x_{T+1}^{(0)} ,x_{T}^{(0)} ,z_{T+1}^{(0)} ,z_{T}^{(0)}\right).\]
From \citet[Corollary 5.1]{r6} 
and differentiability of $Q_{21}$ with respect to the state variables it follows that 
 \begin{equation}\label{80}
\lVert Q_{21,T}\rVert \leqslant C(\alpha+\theta)^{T}, 
\end{equation}
where $\alpha$ is the largest eigenvalue modulus of the matrix $A$ from \eqref{25_}, $C$ is some constant and $\theta$ is arbitrary small positive number. In fact, $\alpha+\theta$ determines the speed of convergence for the deterministic solution to the steady state. 
Inserting \eqref{80} into \eqref{79}, we can conclude
\begin{equation}\label{79}
\begin{split}
&\lVert K_{T,j} -K_{T+1,j}\rVert<  bC(\alpha+\theta)^{T+2},\quad j=0,1,2,\ldots
\end{split}
\end{equation}
Denoting $M=bC(\alpha+\theta)$ and $r=\alpha+\theta$ we finally obtain \eqref{77}. 
\qed

\section{The First Order System}\label{FOS2}
For $n=1$ we have
 \[\left[%
\begin{array}{c}
{s_{0}^{(1)} } \\ 
{ u_{0}^{(1)}}%
\end{array}%
\right]=Z_{1}Z^{-1} \left[%
\begin{array}{c}
{x_{0}^{(1)} } \\ 
{y_{0}^{(1)} }%
\end{array}%
\right] =0,\]
 From \eqref{26_} for the time $T$ 
we have
\begin{equation*}\label{1:1}
u_{T}^{(1)} =-B_{T+1}^{-1} Q_{21,T+1} s_{T}^{(1)} -B_{T+1}^{-1} {\Pi}_{2,t+1}z_{T}^{(1)}
+B_{T+1}^{-1} E_{T} u_{T+1}^{(1)} .
\end{equation*}
Denoting $K_{T,T} =B_{T+1}^{-1} Q_{21,T+1}$ and $R_{T} = B_{T+1}^{-1} {\Pi}_{2,t+1}$ gives
\begin{equation}\label{1:1_}
u_{T}^{(1)} =-K_{T,T}  s_{T}^{(1)} - R_{T} z_{T}^{(1)}
+B_{T+1}^{-1} E_{T} u_{T+1}^{(1)} .
\end{equation}
For $T-1$ we have 
\begin{equation}\label{1:2}
u_{T-1}^{(1)} =-B_{T}^{-1} Q_{21,T} s_{T-1}^{(1)}  -B_{T}^{-1} {\Pi}_{2,t}z_{T-1}^{(1)}
+B_{T}^{-1} E_{T-1} u_{T}^{(1)} .
\end{equation}
Taking conditional expectations at the time $T-1$ from both side \eqref{1:1_}  
and inserting \eqref{2} we get
\begin{equation}\label{1:3_}
E_{T-1}u_{T}^{(1)} =-K_{T,T} E_{T-1}s_{T}^{(1)}  - R_{T}\Lambda z_{T-1}^{(1)}
+B_{T+1}^{-1} E_{T-1} u_{T+1}^{(1)} .
\end{equation}

Inserting \eqref{1:3_} 
into \eqref{1:2} gives
\begin{equation}\label{1:4}
\begin{split}
&u_{T-1}^{(1)} =-B_{T}^{-1} Q_{21,T} s_{T-1}^{(1)}  - B_{T}^{-1} {\Pi}_{2,t}z_{T-1}^{(1)}\\
&+B_{T}^{-1} E_{T-1} (-K_{T,T}E_{T-1}s_{T}  - R_{T}\Lambda z_{T-1}^{(1)}
+B_{T+1}^{-1} E_{T} u_{T+1}^{(1)}) .
\end{split}
\end{equation}

%
%

Inserting now $ E_{T-1}s_{T}$ into \eqref{1:4} from \eqref{26} yields
\begin{equation*}\label{1:5_}
\begin{split}
&u_{T-1}^{(1)} =-B_{T}^{-1} Q_{21,T} s_{T-1}^{(1)}  - B_{T}^{-1} {\Pi}_{2,t}z_{T-1}^{(1)}\\
&+B_{T}^{-1} E_{T-1} [-K_{T,T}(A_{T-1}s_{T-1}^{(1)}  +Q_{12,T} u_{T-1}^{(1)} +{\Pi}_{1,T}z_{T-1}^{(1)})\\
&- R_{T}\Lambda z_{T-1}^{(1)}
+B_{T+1}^{-1} E_{T} u_{T+1}^{(1)}] .
\end{split}
\end{equation*}

Reshuffling terms, we have
\begin{equation}\label{1:5_2}
\begin{split}
&(I +B_{T}^{-1} K_{T,T}Q_{12,T} ) u_{T-1}^{(1)} =-B_{T}^{-1}( Q_{21,T}+ B_{T}^{-1} K_{T,T}A_{T-1})s_{T-1}^{(1)} \\
&- B_{T}^{-1}( {\Pi}_{2,t}+  K_{T,T}{\Pi}_{1,T}+R_{T}\Lambda) z_{T-1}^{(1)} +B_{T}^{-1}B_{T+1}^{-1} E_{T} u_{T+1}^{(1)} .
\end{split}
\end{equation}
%
Multiplying \eqref{1:5_2} by $(I +B_{T}^{-1} K_{T,T}Q_{12,T} )^{-1}$ yields
\begin{equation*}\label{1:6-1}
\begin{split}
&u_{T-1}^{(1)} =-(I +B_{T}^{-1} K_{T,T}Q_{12,T} )^{-1}B_{T}^{-1}( Q_{21,T}+ B_{T}^{-1} K_{T,T}A_{T-1})s_{T-1}^{(1)} \\
&- (I +B_{T}^{-1} K_{T,T}Q_{12,T} )^{-1}B_{T}^{-1}( {\Pi}_{2,t}+  K_{T,T}{\Pi}_{1,T}+R_{T}\Lambda) z_{T-1}^{(1)} \\
&+(I +B_{T}^{-1} K_{T,T}Q_{12,T} )^{-1}B_{T}^{-1}B_{T+1}^{-1} E_{T} u_{T+1}^{(1)} .
\end{split}
\end{equation*}
or
\begin{equation*}\label{1:6_3}
\begin{split}
&u_{T-1}^{(1)} =-(B_{T}+K_{T,T}Q_{12,T} )^{-1}( Q_{21,T}+ B_{T}^{-1} K_{T,T}A_{T-1})s_{T-1}^{(1)} \\
&- (B_{T}+K_{T,T}Q_{12,T} )^{-1}( {\Pi}_{2,t}+  K_{T,T}{\Pi}_{1,T}+R_{T}\Lambda) z_{T-1}^{(1)} \\
&+(B_{T}+K_{T,T}Q_{12,T} )^{-1}B_{T+1}^{-1} E_{T} u_{T+1}^{(1)}.
\end{split}
\end{equation*}
Denoting $L_{T,T-1} = (B_{T-1}+K_{T,T}Q_{12,T-1} )$, we obtain
\begin{equation*}\label{1:6_3}
\begin{split}
&u_{T-1}^{(1)} =-L_{T,T-1}^{-1}( Q_{21,T}+ B_{T}^{-1} K_{T,T}A_{T-1})s_{T-1}^{(1)} \\
&-L_{T,T-1}^{-1}( {\Pi}_{2,t}+  K_{T,T}{\Pi}_{1,T}+R_{T}\Lambda) z_{T-1}^{(1)} \\
&+L_{T,T-1}^{-1}B_{T+1}^{-1} E_{T} u_{T+1}^{(1)}.
\end{split}
\end{equation*}
Denoting 
\[K_{T,T-1} = L_{T,T-1}^{-1}( Q_{21,T}+ B_{T}^{-1} K_{T,T}A_{T-1}) \]
 and 
\[R_{T-1} = L_{T,T-1}^{-1}( {\Pi}_{2,t}+  K_{T,T}{\Pi}_{1,T}+R_{T}\Lambda), \]
we have
\begin{equation*}\label{1:6_4}
\begin{split}
u_{T-1}^{(1)} =-K_{T,T-1} s_{T-1}^{(1)} 
-R_{T-1} z_{T-1}^{(1)} 
+L_{T,T-1}^{-1}L_{T,T}^{-1} E_{T} u_{T+1}^{(1)}.
\end{split}
\end{equation*}

Following the same derivation as in \ref{A} for the  proof of Proposition~\ref{prop1}, we obtain the following representation:
\begin{equation}\label{1:6}
u_t^{(1)} = - K_{T,t}s_t^{(1)}- R_t z_t^{(1)}  ,
\end{equation}
where $R_t$ can be computed by backward recursion 
\begin{equation*}
R_t = L^{-1}_{T,t+1}(\Pi_{2,t} + K_{T,t}\Pi_{1,t} + R_{t+1}\Lambda)
\end{equation*}
Inserting \eqref{26} into \eqref{1:6} gives
\begin{equation*}
E_{t} s_{t+1}^{(1)} =As_{t}^{(1)} +Q_{11,t+1} s_{t}^{(1)} +Q_{12,t+1} ( - K_{T,t}s_t^{(1)} - R_t z_t^{(1)})+{\Pi}_{1,t+1}z_{t}^{(1)}
\end{equation*}
After reshuffling we get
\begin{equation*}
E_{t} s_{t+1}^{(1)} = (A_{t+1}-Q_{12,t+1}K_{T,t})s_{t}^{(1)} +(-Q_{12,t+1} R_t +{\Pi}_{1,t+1})z_{t}^{(1)}.
\end{equation*}
Denoting $\mathbb{A}_{t} = A_{t+1} - Q_{12,t+1}K_{T ,t} $ and $\mathbb{P}_{t} = -  Q_{12,t+1} R_t +{\Pi}_{1,t+1}$, we have
\begin{equation}\label{1:7_}
E_{t} s_{t+1}^{(1)} = \mathbb{A}_{t}s_{t}^{(1)} +\mathbb{P}_{t} z_{t}^{(1)}
\end{equation}
It is easy to see that
\begin{equation*}
\left[%
\begin{array}{c}
{s_{t+1}^{(1)} } \\ 
{u_{t+1}^{(1)} }%
\end{array}%
\right] - E_{t} \left[%
\begin{array}{c}
{s_{t+1}^{(1)} } \\ 
{u_{t+1}^{(1)} }%
\end{array}%
\right]=\left[%
\begin{array}{c}
\mathbb{R}_{1,t} \\ 
\mathbb{R}_{2,t} 
\end{array}
\right] \varepsilon _{t+1}=Z\Phi _{t+1}^{-1} f_{5 ,t+1}\varepsilon _{t+1}.
\end{equation*}\label{1:9}
From \eqref{1:7_} it follows that
\begin{equation*}
(E_{t}s_{t+1} - s_{t+1}) + s_{t+1} = \mathbb{A}_{t}s_{t}+\mathbb{P}_{t}z_t ,
\end{equation*}
thus,  we obtain 
\begin{equation}\label{stable}
 s_{t+1} = \mathbb{A}_{t}s_{t}+\mathbb{P}_{t}z_t + \mathbb{R}_{1,t}\varepsilon _{t+1}.
\end{equation} 
Recall now that the initial conditions are $s_0^{(1)}=0$ and $z_{0}^{(1)}=0$, then for $t=1$ from \eqref{stable} we have
\[ s_1^{(1)}= \mathbb{R}_{1,0}\varepsilon _{1};\] for $t=2$ 
\[   s_{2}^{(1)} = (\mathbb{A}_{1}\mathbb{R}_{1,0}+\mathbb{P}_{t}) \varepsilon _{1}+\mathbb{R}_{1,1}\varepsilon _{2}. \]
Continuing in this fashion, we get the moving-average representation of $s_{t}^{(1)} $:
\begin{equation}\label{1:10}
s_{t}^{(1)}  = \gamma_{t,t}\varepsilon _{t} + \gamma_{t,t-1}\varepsilon _{t-1}+\cdots+ \gamma_{t,2}\varepsilon _{2} +\gamma_{t,1}\varepsilon _{1},
\end{equation}
where the coefficients $\gamma_{t,t-i}$ can be obtained by forward recursion in $t=1,2,\dots,T$ and backward recursion in $i=0,1,\dots,t-1$ 
\begin{align*}
\gamma_{t,t} = \mathbb{R}_{1,t-1},\\
\gamma_{t,t-1} = \mathbb{A}_{t-1}\gamma_{t-1,t-1}+\mathbb{P}_{t-1},\\
\ldots\\
\gamma_{t,t-i} = \mathbb{A}_{t-1}\gamma_{t-1,t-i}+\mathbb{P}_{t-1}{\Lambda}^{i-1},\\
\ldots\\
\gamma_{t,1} = \mathbb{A}_{t-1}\gamma_{t-1,1}+\mathbb{P}_{t-1}{\Lambda}^{t-2}
\end{align*}
Indeed, inserting \eqref{1:10} into \eqref{stable} and taking into account $z_{t} = \varepsilon _{t} + \Lambda \varepsilon _{t-1}+\cdots+ \Lambda^{t-1}\varepsilon _{1}$, we obtain
\begin{equation}
\begin{split}
 s_{t+1} = \mathbb{A}_{t}( \gamma_{t,t}\varepsilon _{t} + \gamma_{t,t-1}\varepsilon _{t-1}+\cdots+ \gamma_{t,2}\varepsilon _{2} +\gamma_{t,1}\varepsilon _{1})\\
+\mathbb{P}_{t}( \varepsilon _{t} + \Lambda \varepsilon _{t-1}+\cdots+ \Lambda^{t-1}\varepsilon _{1}) + \mathbb{R}_{1,t}\varepsilon _{t+1},
\end{split}
\end{equation} 
Collecting terms with $\varepsilon _{j}$ gives
\begin{equation*}
\begin{split}
 s_{t+1} = \mathbb{R}_{1,t}\varepsilon _{t+1}+ (\mathbb{A}_{t} \gamma_{t,t} + \mathbb{P}_{t})\varepsilon _{t} \\
+ (\mathbb{A}_{t} \gamma_{t,t-1} 
+ \mathbb{P}_{t}\Lambda)\varepsilon _{t-1}
+\cdots+  (\mathbb{A}_{t} \gamma_{t,1} + \mathbb{P}_{t}\Lambda^{t-1})\varepsilon _{1}.
\end{split}
\end{equation*}

Thus, for each $t$ we compute $\gamma_{t,i}$, starting with the first index $t=1$, then decreasing the index $i=t,t-1,\dots,1$ and using at each step $\gamma_{t-1,i}$.
%
 For the variable $u^{(1)}_t$ we also have a moving-average representation.
Inserting the moving-average representation of the process $z^{(1)}_{t}$ and \eqref{1:10} in \eqref{1:6}, we have
\begin{equation}
u^{(1)}_t =- K_{T,t}(\gamma_{t,t}\varepsilon _{t} +\cdots+ \gamma_{t,1}\varepsilon _{1}) - R_t(\varepsilon _{t} + \Lambda\varepsilon _{t-1}+\dots+{\Lambda}^{t-1}\varepsilon _{1}),
\end{equation}
or in the shorter form
\begin{equation}
u^{(1)}_t = \delta_{t,t}\varepsilon _{t} +\delta_{t,t-1}\varepsilon _{t-1}+\cdots+ \delta_{t,2}\varepsilon _{2} +\delta_{t,1}\varepsilon _{1},
\end{equation}
where $\delta_{t,i} =- K_{T,t}\gamma_{t,i} - R_{t}{\Lambda}^{i-1}$. 

Taking into account that $x^{(1)}_{t } = Z_{11}s^{(1)}_t + Z_{12}u^{(1)}_t $ and $y^{(1)}_{t } = Z_{21}s^{(1)}_t + Z_{22}u^{(1)}_t $, we get 
the moving-average representation for original variables 
 \[x^{(1)}_{t } = \rho_{t,t}^{x}\varepsilon _{t} +\rho_{t,t-1}^{x}\varepsilon _{t-1}+\cdots+ \rho_{t,2}^{x}\varepsilon _{2} +\rho_{t,1}^{x}\varepsilon _{1},\] 
\[y^{(1)}_{t } = \rho_{t,t}^{y}\varepsilon _{t} +\rho_{t,t-1}^{y}\varepsilon _{t-1}+\cdots+ \rho_{t,2}^{y}\varepsilon _{2} +\rho_{t,1}^{y}\varepsilon _{1},\]
where $  \rho_{t,i}^{x} =  Z_{11}\gamma_{t,i} +Z_{12}\delta_{t,i}$ and  $ \rho_{t,i}^{y} =  Z_{21}\gamma_{t,i} +Z_{22}\delta_{t,i}$.

\bibliographystyle{elsarticle-harv}

\end{document}